\newtheorem{lemm}{Lemma}[section]
\newtheorem{prop}[lemm]{Proposition}
\newtheorem{defi}[lemm]{Definition}
\newcommand{\scpr}[2]{\left\langle#1\,,\, #2 \right\rangle}
\newcommand{\betr}[1]{\lvert #1 \rvert}
\newcommand{\R}{\mathbb{R}}                  
\newcommand{\C}{\mathbb{C}}                  
\newcommand{\RB}{\R_{\text{B}}}   
\newcommand{\RBd}{\widehat{\R}_{\text{B}}}   
\newcommand{\Sch}{\mathcal{S}}                 
\newcommand{\Schd}{\mathcal{S}'}
\newcommand{\wh}[1]{\widehat{#1}}
\newcommand{\ftn}[1]{\widetilde{#1}}
\newcommand{\ft}[2]{\wh{#1}_{#2}}            
\newcommand{\ftcc}[2]{\overline{\wh{#1}}_{#2}}    
\newcommand{\ch}[2]{h_{#1}(#2)}            
\newcommand{\op}[1]{\boldsymbol{#1}}
\DeclareMathOperator{\symb}{Symb}
\DeclareMathOperator{\cyl}{Cyl(\RB)}              
\DeclareMathOperator{\cyld}{Cyl(\RBd)}
\DeclareMathOperator{\cylcyld}{Cyl(\RB\times\RBd)}
\DeclareMathOperator{\cyldd}{\cyld\!{}^*}
\DeclareMathOperator{\re}{\text{Re}}
\DeclareMathOperator{\im}{\text{Im}}
\DeclareMathOperator{\sign}{\text{Sign}}
\title{Phase space quantization and Loop Quantum Cosmology: A Wigner function
for the Bohr-compactified real line}
\author{Christopher J. Fewster$^1$ and Hanno Sahlmann$^2$}
\date{1. Department of Mathematics,
University of York,\\ Heslington, York YO10 5DD, U.K.\\
2. Spinoza Institute, Universiteit Utrecht\\[.3cm]\today\\[.3cm]
{\small Preprint ITP-UU-08/10, SPIN-08/10}}
\begin{document}
\maketitle
\begin{abstract}
We give a definition for the Wigner function for
quantum mechanics on the Bohr compactification of
the real line and prove a number of simple
consequences of this definition. We then discuss
how this formalism can be applied to loop quantum
cosmology. As an example, we use the Wigner
function to give a new quantization of an
important building block of the Hamiltonian
constraint.
\end{abstract}
\section{Introduction}
The Wigner function \cite{wigner} has long been
recognized as a tool in quantum mechanics. For a
wave function $\Psi(x)$ on the real line it is defined as
\begin{equation*}
 W(\Psi)(x,p)\doteq\int
 \overline{\Psi}\left(x+\frac{1}{2}x'\right)\Psi\left(x-\frac{1}{2}x'\right)e^{ipx'}\,\text{d}x'.
\end{equation*}
It is a function on phase space that comes, in a
certain sense, as close to being a classical
probability distribution corresponding to $\Psi$
on phase space as possible. It can therefore be
used to analyze the extent to which a given quantum
state can be described in classical terms.
Furthermore the Wigner function
figures prominently in Weyl quantization, a
map that assigns symmetric operators to real functions
(subject to smoothness and fall-off criteria)
on phase space in a systematic fashion. In physics parlance,
Weyl quantization is referred to as \emph{totally symmetric ordering}.
A comprehensive mathematical treatment of the Wigner function and its properties
in quantum mechanics can be found in
\cite{folland}.

Loop quantum cosmology (LQC for short, see \cite{Bojowald:2006da} for
a review) is a theory of quantum cosmology developed in close
connection with loop quantum gravity
\cite{Ashtekar:2004eh,Ashtekar:2007tv,Thiemann:2001yy}, and can be
viewed as a symmetry reduced version of the latter. It has been used
as a testbed for techniques used in loop quantum gravity, but it can
be argued that it also makes physical predictions in its own right.
One large set of results shows that the classical singularities of
cosmology are resolved in the quantum theory. As an example, we refer
to \cite{Ashtekar:2006uz} for a beautiful result in this direction.

Technically LQC started out as a quantum theory on a circle, but it
was later realized that it is actually more appropriately formulated
as a quantum theory on the Bohr compactification $\RB$ of the real
line. While we will review some of the mathematics of $\RB$ and of
the functions on this space in section \ref{se_rbohr} below, we refer
to \cite{Ashtekar:2003hd} for a good overview over both mathematical
and physical aspects of these developments.

Given that quantum mechanics on $\RB$ is the foundation
for LQC, it is an interesting question whether the
Wigner function can be generalized to this setting.
A look at the literature
shows that the Wigner function can be, and has been, generalized in a
number of ways, for example to quantum
mechanics on U(1), and more generally, certain non-Abelian
groups (see for example \cite{ali,mukunda}).
In fact, the Wigner function for
U(1) has made a brief appearance in LQG
\cite{Bojowald:2001ep}, where it was used in the
study of the semiclassical limit. In these
generalizations, the role of Fourier analysis is
played by its natural generalizations for
harmonic analysis on groups (Pontryagin duality
in the Abelian case, Peter-Weyl theory for
compact non-Abelian groups). It must however be
said that such generalizations are generically
neither unique, nor do they share all the
properties of the Wigner function on $\R$.

To the best of our knowledge however, a
generalization of the Wigner function to $\RB$
has not yet been considered. The present paper
intends to fill this gap. As it turns out, the
generalization of the Wigner function to $\RB$ is
quite straightforward. What is more, its properties mirror that of
its cousin on $\R$ extremely closely. This is due
on the one hand to the Abelian nature of $\RB$,
on the other hand to a useful property of
the Pontryagin dual of $\RB$, namely that the
operation of ``taking a square root" with respect
to its group product is well defined. Such square
roots (or divisions by two, in additive notation)
will naturally show up when proving properties of
the Wigner function.

We will also demonstrate the applicability of the
Wigner function to issues in LQC. In particular
we will use it to obtain the Weyl quantization of
the modified holonomy from
\cite{Ashtekar:2006wn}, which in turn could be
used to define a modified quantum dynamics. We
will compare the properties of this quantization
to the standard one, but we will not yet use it
to complete the quantization of the Hamiltonian
constraint and attempt an analysis of the
physical differences that would result.

We should say that there are other conceivable
applications of the Wigner functions besides the
one we demonstrate in this paper. To give an
example we recall that recently a method has been
established in LQC that allows to calculate
effective equations of motion within a systematic
approximation scheme \cite{Bojowald:2005cw,
Bojowald:2006ww}. Weyl ordering figures
prominently in this method and thus we expect
that the Wigner function techniques from the
present paper may also be useful in that context.

The paper is organized as follows: We start by giving a brief review of the
properties of the Wigner function in ordinary
quantum mechanics in section \ref{se_review}. In
section \ref{se_rbohr} we generalize its
definition to quantum mechanics on the Bohr
compactification $\RB$, and list analogous
properties. In section \ref{se_applications} we
sketch an application to LQC. We finish with a
discussion of our results and the possibility of
a generalization to loop quantum gravity in
section \ref{se_discussion}.
\section{The Wigner function on $\R$}
\label{se_review}
In the present section we will recall the
definition of the Wigner function for quantum
mechanics on the real line. We will follow
closely the exposition in \cite{folland} (and refer to it for proofs and
details) although with slightly
different conventions. Our Fourier transform convention will be
\begin{equation*}
\ftn{f}(k)\doteq \int f(x) e^{-ikx}\, dx,
\qquad f(x)=\int \ftn{f}(k) e^{ikx}\, \frac{dk}{2\pi};
\end{equation*}
we write the scalar product on
$L^2(\R,dx)$ as $\scpr{\cdot}{\cdot}$, and denote the
Schwartz test functions on $\R$ by $\Sch(\R)$.
We
introduce the usual position and momentum operators $\op{X}$ and $\op{P}$
\begin{equation}
\label{eq_schroedinger}
\op{X}\Psi(x)=x\Psi(x)\qquad\text{and}\qquad
\op{P}\Psi(x)=\frac{1}{i} \frac{d}{dx} \Psi(x),
\end{equation}
for $\Psi\in\Sch(\R)$.

The Wigner function $W(\Psi)$ of a wave function is conveniently defined
as a special case $W(\Psi)=W(\Psi,\Psi)$ of the Wigner transform $W(\Psi,\Psi')$ of a \emph{pair}
of wave functions, in turn defined by the following equivalent expressions:
\begin{align}
\label{eq_wig1}
 W(\Psi,\Psi')(x,p)&\doteq\int
 \overline{\Psi}(x+x'/2)\Psi'(x-x'/2)e^{ipx'}\,\text{d}x'\\
&=2\int\overline{\ftn{\Psi}}(k)\ftn{\Psi}'(2p-k)e^{2
ix(p-k)}\,\frac{\text{d}k}{2\pi}\label{eq_wig3}\\
&=\int\overline{\ftn{\Psi}}(p-q/2)\ftn{\Psi}'(p+q/2)e^{ixq}\,\frac{\text{d}q}{2\pi}\label{eq_wig4}
\end{align}
Some of its basic properties are as follows (see Props. 1.92 and 1.96
in~\cite{folland}):
\begin{prop}
Eq.\eqref{eq_wig1} defines the Wigner transform as a sesquilinear map
between the following spaces:
\begin{equation*}
W:\begin{array}{ccl}
\Sch (\R) \times \Sch (\R) &\longrightarrow &\Sch (\R^2)\\
L^2 (\R) \times L^2 (\R) &\longrightarrow &L^2(\R^2)\cap C_\infty(\R^2)\\
\Schd (\R) \times \Schd (\R) &\longrightarrow &\Schd (\R^2),
\end{array}
\end{equation*}
with the last of these a continuous extension of the others, and
$C_\infty(\R^2)$ denoting continuous functions vanishing at infinity. 
On $L^2(\R)\times L^2(\R)$ the Wigner transform has the overlap property
\begin{equation}
\iint \overline{W(\Psi_1,\Psi_2)}
W(\Phi_1,\Phi_2)\, \frac{\textrm{d}x \textrm{d}p}{2\pi}
=\overline{\scpr{\Psi_1}{\Phi_1}}\scpr{\Psi_2}{\Phi_2}
\label{eq_Wig_ip}
\end{equation}
and is hermitian:
\begin{equation}
\label{eq_hermite}
\overline{W(\Psi,\Psi')}=W(\Psi',\Psi).
\end{equation}
Furthermore, for $\Psi\in\Sch(\R)$, $W(\Psi,\Psi)$ has marginal distributions
\begin{equation}
\label{eq_marginal} \int W(\Psi,\Psi)(x,k)\,
\textrm{d}x =\betr{\ftn{\Psi}}^2(k),\qquad \int
W(\Psi,\Psi)(x,k)\, \frac{\textrm{d}k}{2\pi} =\betr{\Psi}^2(x)
\end{equation}
and the same is true (modulo technical details) for $\Psi\in L^2(\R)$.
\end{prop}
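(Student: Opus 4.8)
The plan is to establish every assertion first on the dense domain $\Sch(\R)\times\Sch(\R)$ by direct computation, and then to propagate the statements to $L^2(\R)$ by continuity and to $\Schd(\R)$ by transposition. For the Schwartz mapping property I would observe that $F(x,x')\doteq\overline{\Psi}(x+x'/2)\Psi'(x-x'/2)$ is the pullback of the Schwartz function $\overline\Psi\otimes\Psi'$ on $\R^2$ under an invertible linear change of variables, hence $F\in\Sch(\R^2)$; since $\eqref{eq_wig1}$ exhibits $W(\Psi,\Psi')$ as the partial Fourier transform of $F$ in $x'$, and that transform is a topological isomorphism of $\Sch(\R^2)$, we get $W(\Psi,\Psi')\in\Sch(\R^2)$ with continuous sesquilinear dependence. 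Hermiticity $\eqref{eq_hermite}$ then follows by conjugating $\eqref{eq_wig1}$ and substituting $x'\mapsto -x'$. The marginals $\eqref{eq_marginal}$ come from integrating the representation adapted to each case: integrating $\eqref{eq_wig4}$ over $x$ yields a factor $2\pi\delta(q)$ that collapses the $q$-integral to give $\betr{\ftn\Psi}^2$, while integrating $\eqref{eq_wig1}$ over momentum (measure $dp/2\pi$) yields $\delta(x')$ and hence $\betr\Psi^2$; for Schwartz data Fubini and Fourier inversion make both manipulations rigorous.

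The analytic heart of the proposition is the overlap identity $\eqref{eq_Wig_ip}$, which I would prove for Schwartz arguments by a single explicit computation. Writing both factors via $\eqref{eq_wig1}$ (using $\eqref{eq_hermite}$ to handle the conjugated one) and performing the $p$-integration first produces $\int e^{ip(x'-y)}\,dp/2\pi=\delta(x'-y)$, which identifies the two auxiliary variables; after the unit-Jacobian change of variables $(x,y)\mapsto(u,v)=(x+y/2,x-y/2)$ the remaining double integral factorizes cleanly into $\overline{\scpr{\Psi_1}{\Phi_1}}$ times $\scpr{\Psi_2}{\Phi_2}$. The only genuine care required here is the justification of Fubini and of the order of integration; for Schwartz functions every integrand is absolutely integrable on the relevant product space, so this is routine, but it is the step at which the bookkeeping could most easily go astray and so is the main obstacle.

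To pass to $L^2(\R)$ I would read the overlap identity with $\Psi_1=\Phi_1$ and $\Psi_2=\Phi_2$ as the isometry-type bound $\norm{W(\Psi,\Psi')}_{L^2(\R^2,dx\,dp/2\pi)}=\norm\Psi\,\norm{\Psi'}$, and independently extract from $\eqref{eq_wig1}$ the Cauchy--Schwarz estimate $\betr{W(\Psi,\Psi')(x,p)}\le\norm\Psi\,\norm{\Psi'}$ controlling the sup norm. Since $W$ maps $\Sch\times\Sch$ into $\Sch(\R^2)\subset C_\infty(\R^2)$ and both $L^2(\R^2)$ and the sup-norm-closed space $C_\infty(\R^2)$ are complete, density of $\Sch(\R)$ in $L^2(\R)$ yields a unique continuous sesquilinear extension $L^2\times L^2\to L^2(\R^2)\cap C_\infty(\R^2)$, under which $\eqref{eq_hermite}$ and $\eqref{eq_Wig_ip}$ persist. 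Finally, for tempered distributions I would transpose the Schwartz case: pairing $\eqref{eq_wig1}$ against a test function $\Phi\in\Sch(\R^2)$ and carrying out the same change of variables exhibits $\langle W(\Psi,\Psi'),\Phi\rangle$ as the pairing of $\overline\Psi\otimes\Psi'$ with the image of $\Phi$ under a fixed isomorphism of $\Sch(\R^2)$; this formula reads verbatim when $\Psi,\Psi'\in\Schd(\R)$, defines $W(\Psi,\Psi')\in\Schd(\R^2)$ continuously, and furnishes the common continuous extension. Beyond the overlap computation, the remaining point to check is that this distributional definition is genuinely compatible with the $L^2$ and Schwartz ones on their overlap.
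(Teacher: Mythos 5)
You should be aware that the paper itself contains no proof of this proposition: it is stated as review material, with the proofs explicitly deferred to Folland (Props.\ 1.92 and 1.96 of \cite{folland}). Your argument is in substance the standard one found there, so there is no genuine divergence of method to report: you realise the integrand of \eqref{eq_wig1} as the pullback of $\overline{\Psi}\otimes\Psi'$ under an invertible linear map followed by a partial Fourier transform, which simultaneously yields the $\Sch(\R^2)$ mapping property and, by transposing this fixed topological isomorphism of $\Sch(\R^2)$, the $\Schd(\R^2)$-valued extension together with its compatibility with the Schwartz and $L^2$ cases; hermiticity and the marginals follow by the substitutions and Fourier-inversion arguments you describe, which are sound for Schwartz data.

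Two steps need repair, though neither is fatal. First, in the overlap identity \eqref{eq_Wig_ip} your justification is wrong as stated: after inserting both integral representations, the quadruple integral over $(x,p,x',y)$ is \emph{not} absolutely convergent --- the factor $e^{ip(x'-y)}$ has modulus one, so the $p$-integration diverges absolutely and Fubini cannot be invoked to produce $\delta(x'-y)$. The rigorous route (and Folland's) is to observe that for each fixed $x$ the function $x'\mapsto\overline{\Psi}(x+x'/2)\Psi'(x-x'/2)$ is Schwartz with $W(\Psi,\Psi')(x,\cdot)$ as its Fourier transform, apply the Parseval identity in the dual pair $x'\leftrightarrow p$ at fixed $x$, and only then integrate over $x$ and perform the unit-Jacobian change of variables $(u,v)=(x+x'/2,\,x-x'/2)$; alternatively one may insert a Gaussian regulator $e^{-\epsilon p^2}$ and pass to the limit. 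You correctly flagged this as the delicate step but misdiagnosed the remedy. Second, a harmless slip: the pointwise estimate from \eqref{eq_wig1} is $\betr{W(\Psi,\Psi')(x,p)}\le 2\norm{\Psi}_2\norm{\Psi'}_2$, the factor $2$ arising from the Jacobian of the substitution $x'=2u$ made before Cauchy--Schwarz; your stated constant $1$ is false, but any finite constant suffices for the density argument, and the exact $L^2(\R^2,\mathrm{d}x\,\mathrm{d}p/2\pi)$ norm $\norm{\Psi}\norm{\Psi'}$ needed for the extension into $L^2(\R^2)\cap C_\infty(\R^2)$ is supplied by the (corrected) overlap identity. With these two adjustments your proof is complete and coincides with the treatment the paper cites.
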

We note in particular that according to \eqref{eq_hermite},
$W(\Psi,\Psi)$ is real. Moreover, its
``marginals'' are the quantum mechanical
probability distributions for measurements of
position and momentum according to
\eqref{eq_marginal}. However, $W(\Psi,\Psi)$ fails to be a
joint probability distribution because
it is not positive in general. It is
positive precisely for Gaussian states, arguably the most
classical quantum states:
\begin{prop}[Hudson's theorem \cite{hudson}]
\label{pr_gauss} For $\Psi\in L^2(\R)$, $\Psi\neq
0$ it holds that
\begin{equation*}
W(\Psi,\Psi)\geq 0 \qquad \Longleftrightarrow \qquad \Psi \text{ is a
Gaussian}.
\end{equation*}
\end{prop}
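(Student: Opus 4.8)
The statement is an equivalence whose two directions are of very different character: that a Gaussian has nonnegative Wigner function is a direct Gaussian integral, whereas the converse is the genuine content of the theorem and is essentially complex-analytic.

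For the forward direction I would insert a general normalizable Gaussian $\Psi(x)=\exp(-\alpha x^2+\beta x+\gamma)$, with $\alpha,\beta,\gamma\in\C$ and $\re\alpha>0$, into the defining integral \eqref{eq_wig1}. The $x'$-integration is then Gaussian, and completing the square gives $W(\Psi,\Psi)(x,p)=C\exp(-Q(x,p))$ with $C>0$ and $Q$ a real quadratic form in $(x,p)$; a short check using $\re\alpha>0$ shows $Q$ is positive semidefinite, so $W(\Psi,\Psi)\ge 0$. I expect nothing beyond routine bookkeeping here. As a byproduct this computation shows that any coherent state (a displaced ground-state Gaussian) has a \emph{strictly} positive Gaussian Wigner function, a fact I will use below.

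For the converse I would build on the already-established overlap property \eqref{eq_Wig_ip}. Taking $\Phi_1=\Phi_2=\phi_{(q,k)}$, the normalized coherent state centered at the phase-space point $(q,k)$, and using that $W(\Psi,\Psi)$ is real by \eqref{eq_hermite}, equation \eqref{eq_Wig_ip} becomes
\[
\betr{\scpr{\Psi}{\phi_{(q,k)}}}^2=\iint W(\Psi,\Psi)(x,p)\,G(x-q,p-k)\,\frac{\mathrm{d}x\,\mathrm{d}p}{2\pi},
\]
where $G$ is the strictly positive centered Gaussian from the previous paragraph. Thus the Husimi function $(q,k)\mapsto\betr{\scpr{\Psi}{\phi_{(q,k)}}}^2$ is the convolution of $W(\Psi,\Psi)$ with a strictly positive kernel. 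Assuming now $W(\Psi,\Psi)\ge 0$ and $\Psi\ne 0$, the integrand is pointwise nonnegative; were the Husimi function to vanish at a single point, the integral would force $W(\Psi,\Psi)\equiv 0$ and hence $\Psi=0$, a contradiction. Therefore $\scpr{\Psi}{\phi_{(q,k)}}\ne 0$ for \emph{every} $(q,k)$.

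It remains to convert this zero-free condition into the Gaussian form of $\Psi$. Writing the coherent-state overlaps as $\scpr{\phi_{(q,k)}}{\Psi}=e^{-\betr{z}^2/4}F(z)$, with $z$ the complex coordinate built from $(q,k)$, the map $F$ is the Bargmann transform of $\Psi$: it is entire, and membership of $\Psi$ in $L^2(\R)$ places $F$ in the Bargmann--Fock space, forcing the growth bound $\betr{F(z)}\le\norm{\Psi}\,e^{\betr{z}^2/4}$ and hence order at most $2$. The previous paragraph shows $F$ has no zeros, so by Hadamard's factorization theorem $F(z)=\exp(az^2+bz+c)$ for some $a,b,c\in\C$; the Fock-space growth bound constrains $\betr{a}$, and inverting the Bargmann transform returns $\Psi$ to a (possibly squeezed and displaced) Gaussian, completing the proof. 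I expect the real obstacle to be precisely this last step: verifying the order-$2$ growth of $F$ from $L^2$-membership and invoking Hadamard factorization cleanly, and then checking that the admissible quadratic exponentials correspond exactly to normalizable Gaussians.
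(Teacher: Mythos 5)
The paper offers no proof of Prop.~\ref{pr_gauss}: it is quoted as a known result with a citation to \cite{hudson} (see also \cite{folland}), so there is no in-paper argument to compare yours against; what can be said is that your reconstruction is the standard proof, and it is correct. The forward direction is routine Gaussian integration as you say (note that nonnegativity already follows from $C>0$ together with reality of the exponent, which is guaranteed by \eqref{eq_hermite}; positive semidefiniteness of $Q$ is only needed to confirm integrability of the result). For the converse, your chain --- the overlap formula \eqref{eq_Wig_ip} evaluated on coherent states, strict positivity of the coherent-state Wigner kernel, hence a zero-free Husimi function (here you should invoke the continuity of $W(\Psi,\Psi)$ for $\Psi\in L^2$, stated in the paper's first proposition, to pass from almost-everywhere to pointwise statements), a zero-free entire Bargmann transform of order at most $2$ via the Cauchy--Schwarz bound $\betr{F(z)}\le\norm{\Psi}e^{\betr{z}^2/4}$, Hadamard factorization, and recovery of a normalizable Gaussian --- is precisely Hudson's argument. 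The one step genuinely needing care is the one you flag yourself: the pointwise bound only gives $\betr{a}\le 1/4$, and the borderline case $\betr{a}=1/4$ must be excluded using square-integrability (Fock-space membership of $F$, not merely the sup bound), after which inverting the Bargmann transform does return a displaced/squeezed Gaussian. It is worth observing that the paper adapts this very strategy when it proves its $\RB$ analogue (the converse to Prop.~\ref{pr_positive}): there the nonvanishing entire function $G(z)=\Gamma_z(\overline{\Phi})$ is of \emph{exponential} type, because $\widehat{\Phi}$ has finite support, so the Hadamard step (Theorem VIII.10 in \cite{Saks_Zygmund}) yields a polynomial of degree at most one rather than two --- the paper explicitly remarks that this is where the Bohr-compactified setting is easier than the classical case whose proof you have reconstructed.
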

In this sense the Wigner function is as close to
a classical joint probability distribution in
position and momentum as quantum mechanics
allows.

The Wigner function can also be used in
quantization: As per \eqref{eq_marginal},
\begin{equation*}
\begin{split}
\iint x W(\Psi,\Psi')(x,p) \,\frac{\text{d}x\text{d}p}{2\pi} =
\scpr{\Psi}{\op{X}\Psi'},\\
\iint p W(\Psi,\Psi')(x,p) \,\frac{\text{d}x\text{d}p}{2\pi}  =
\scpr{\Psi}{\op{P}\Psi'}
\end{split}
\end{equation*}
whenever $\Psi'$ is in the domain of the
respective operators. Also it is easy to
calculate
\begin{equation*}
\iint xp W(\Psi,\Psi')(x,p) \,\frac{\text{d}x\text{d}p}{2\pi}=
\scpr{\Psi}{\frac{1}{2}(\op{PX}+\op{XP})\Psi'}
\end{equation*}
which shows a relation to quantization by
symmetric ordering. In fact this relation carries on much
further: For phase space functions $\sigma(x,p)$ in $L^1(\R^2)$,
we may define a quantization $\op{\sigma}$ of $\sigma$ by
the Bochner integral
\begin{equation*}
\op{\sigma}\doteq \int \ftn{\sigma}(p,x)
\exp[i(p\op{X}+x\op{P})]\, \frac{\text{d}x\text{d}p}{2\pi}
\end{equation*}
where
\begin{equation*}
\ftn{\sigma}(p,x) \doteq \int
\sigma(x',p')e^{-i(px'+xp')}\frac{\text{d}x'\text{d}p'}{2\pi}.
\end{equation*}
One can even extend this definition to $\sigma(x,p)$
which are in $\Schd(R^2)$. The matrix elements of this operator are
determined in a very simple manner from the
Wigner function and the symbol $\sigma$:
\begin{prop}[Prop.\ (2.5) in \cite{folland}]
\label{pr_calculus}
For $\sigma\in \Schd(R^2)$ and $\Psi,\Psi'\in \Sch(\R)$,
\begin{equation*}
\scpr{\Psi}{\op{\sigma}\Psi'}=\iint\sigma(x,p)W(\Psi,\Psi')(x,p)\,
\frac{\textrm{d}x\textrm{d}p}{2\pi}.
\end{equation*}
\end{prop}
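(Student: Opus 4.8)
The plan is to prove the identity first for Schwartz symbols $\sigma\in\Sch(\R^2)$ by a direct computation, and then to extend it to all $\sigma\in\Schd(\R^2)$ by continuity and duality. Everything hinges on understanding the Weyl operator $\exp[i(p\op{X}+x\op{P})]$ that appears in the definition of $\op{\sigma}$, so I would start there. Since $[\op{X},\op{P}]=i$ is central on $\Sch(\R)$, the Baker--Campbell--Hausdorff formula disentangles the exponential as
\[
\exp[i(p\op{X}+x\op{P})]=e^{ipx/2}\,e^{ip\op{X}}\,e^{ix\op{P}},
\]
in which $e^{ix\op{P}}$ acts as the translation $\Psi'\mapsto\Psi'(\cdot+x)$ and $e^{ip\op{X}}$ as multiplication by $e^{ip\,\cdot}$. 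Hence, for $\Psi,\Psi'\in\Sch(\R)$,
\[
\scpr{\Psi}{\exp[i(p\op{X}+x\op{P})]\Psi'}=e^{ipx/2}\int\overline{\Psi}(y)\,e^{ipy}\,\Psi'(y+x)\,dy.
\]

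The next step is to recognize the right-hand side as a (crossed) two-dimensional Fourier transform of the Wigner transform. The substitution $y=x_0-x/2$ absorbs the prefactor and turns the integral into $\int\overline{\Psi}(x_0-x/2)\Psi'(x_0+x/2)e^{ipx_0}\,dx_0$; carrying out the $p_0$-integration of $W(\Psi,\Psi')$ against $e^{ixp_0}$ in \eqref{eq_wig1} produces exactly this expression, so that
\[
\scpr{\Psi}{\exp[i(p\op{X}+x\op{P})]\Psi'}=\int W(\Psi,\Psi')(x_0,p_0)\,e^{i(px_0+xp_0)}\,\frac{dx_0\,dp_0}{2\pi}.
\]
With this in hand the computation closes quickly. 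Substituting into the Bochner integral for $\op{\sigma}$ and interchanging the order of integration — legitimate because $\ftn{\sigma}$ and $W(\Psi,\Psi')$ are both Schwartz, the latter by the mapping property $W:\Sch(\R)\times\Sch(\R)\to\Sch(\R^2)$ recorded above — gives
\[
\scpr{\Psi}{\op{\sigma}\Psi'}=\int W(\Psi,\Psi')(x_0,p_0)\left[\int\ftn{\sigma}(p,x)\,e^{i(px_0+xp_0)}\,\frac{dp\,dx}{2\pi}\right]\frac{dx_0\,dp_0}{2\pi}.
\]
The inner bracket is precisely the Fourier inversion of $\ftn{\sigma}$ and returns $\sigma(x_0,p_0)$, so the claimed formula follows for $\sigma\in\Sch(\R^2)$.

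Finally I would extend to $\sigma\in\Schd(\R^2)$. The right-hand side is already meaningful there: since $W(\Psi,\Psi')\in\Sch(\R^2)$, the expression $\iint\sigma\,W(\Psi,\Psi')\,\frac{dx\,dp}{2\pi}$ is nothing but the pairing of a tempered distribution with a Schwartz function, and this also serves to \emph{define} $\op{\sigma}$ for distributional symbols. Both sides are then separately continuous in $\sigma$ for the weak-$*$ topology of $\Schd(\R^2)$ and agree on the dense subspace $\Sch(\R^2)$, hence everywhere. The main obstacle I anticipate is not the Schwartz computation, which is routine, but this last step: one must verify that $\op{\sigma}$ is genuinely well-defined for $\sigma\in\Schd(\R^2)$ and that $\sigma\mapsto\scpr{\Psi}{\op{\sigma}\Psi'}$ is continuous, so that the passage from $\Sch(\R^2)$ to $\Schd(\R^2)$ is justified rather than merely formal.
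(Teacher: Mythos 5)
Your proof is correct and follows essentially the same route as the source: the paper gives no proof of this proposition itself, citing Folland's Prop.\ 2.5, and Folland's argument is exactly yours --- disentangle $\exp[i(p\op{X}+x\op{P})]$ via Baker--Campbell--Hausdorff, recognize its matrix elements (the Fourier--Wigner transform) as the symplectic Fourier transform of $W(\Psi,\Psi')$, and conclude by Fourier inversion and Fubini, with the distributional case handled by duality. Your closing caveat is well placed: for $\sigma\in\Schd(\R^2)$ the pairing against $W(\Psi,\Psi')\in\Sch(\R^2)$ is in effect the \emph{definition} of $\op{\sigma}$ as a quadratic form, so the genuine content of the proposition is the consistency check at the Schwartz level, which you carried out correctly, including the $2\pi$ normalizations.
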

After this brief exposition of the properties of the Wigner function on
$\R$, we can turn now to the actual topic of this paper.
\section{The Wigner function on $\RB$}
\label{se_rbohr}
\subsection{The Bohr compactification $\RB$}
In the present section we define the
Wigner function for wave functions on the Bohr
compactification of the real line, and derive
some of its properties. We will start by
recalling some basic facts about harmonic
analysis on Abelian groups and the definition of
$\RB$. A good reference for these matters is \cite{reiter}.

Given any locally compact Abelian group $G$, one
can form the dual group $\widehat{G}$ as the
Abelian group of (continuous) characters of $G$.
Multiplication in $\widehat{G}$ is given by pointwise
multiplication of characters, the inverse by
complex conjugation, and the topology by uniform
convergence on compact sets. With this topology
$\widehat{G}$ itself becomes a locally compact
group. There is a natural isomorphism between $G$ and its double-dual.

As locally compact Abelian groups, $G$ and
$\widehat{G}$  have unique (up to scaling) Haar
measures $d\mu$, $d\widehat{\mu}$. Fourier
transform can be defined as
\begin{equation*}
\widehat{f}(\chi)=\int_G d\mu\,
f(x)\overline{\chi}(x)
\end{equation*}
for a character $\chi$ of $G$.
The normalization of the Haar measures can be
chosen such that Fourier transform becomes an isomorphism
\begin{equation*}
L^2(G,d\mu)\longrightarrow
L^2(\widehat{G},d\widehat{\mu}).
\end{equation*}
A locally compact Abelian group $G$ is compact
iff $\widehat{G}$ is discrete. This is used to
define the Bohr compactification $B(G)$ of a
locally compact group $G$: $B(G)$ is defined as
the dual group of $\widehat{G}_\text{discr}$, with the
latter being $\widehat{G}$ as far as group structure is
concerned, but equipped with the
\textit{discrete} topology. For the reals this
works out as follows.

Let $G$ be the additive group of real numbers $G=(\R,+)$ with its usual
topology, which we think of as the configuration space of ordinary
quantum mechanics in one dimension. The characters of $G$ are precisely the functions
$h_\mu:G\to\C$ given by
\begin{equation}\label{eq:char_form}
\ch{\mu}{c} = \exp[i\mu c]
\end{equation}
labelled by $\mu\in\R$, and form a group $\widehat{G}$ isomorphic to
$(\R,+)$ with the usual topology, i.e., the usual momentum space. Thus
$\widehat{G}_\text{discr}$ is the additive group of real numbers with
the discrete topology. Since this group is discrete and Abelian, its
characters form a compact Abelian group $B(\R)$, which we will also
denote $\RB$. Now each real number
$c$ defines an obvious character $\RBd\owns\mu\mapsto\ch{\mu}{c}\in\C$ of
$\RBd$ and this correspondence embeds $(\R,+)$ as a dense subgroup
of $\RB$, justifying the description of $\RB$ as a compactification of the real line.
The reason that there are more characters of $\RBd$ than of
$\widehat{G}$ is that there is now no continuity requirement in the $\mu$
variable (or more precisely, continuity is required with respect to the
discrete topology). It will be convenient to denote the character on $\RBd$ corresponding to
any $c\in\RB$ by $\mu\mapsto\ch{\mu}{c}$; on the other hand the maps $c\mapsto
\ch{\mu}{c}$ for $\mu\in\R$ define characters on $\RB$, which
continuously extend the formula \eqref{eq:char_form} from $\R$ to $\RB$.

Both $\RB$ and $\RBd$ carry Haar-measures
$\text{d}c$, $\text{d}\mu$: $\text{d}\mu$ is just
the counting measure on $\R$,
\begin{equation*}
\int_{\RBd} \ft{f}{\mu}\, \text{d}\mu =
\sum_{\mu\in\R} \ft{f}{\mu}
\end{equation*}
and $\text{d}c$ is characterized by
\begin{equation*}
\int_{\RB} \ch{\mu}{c}\, \text{d}c=
\delta_{\mu,0}.
\end{equation*}
Fourier transformation
\begin{equation*}
\label{eq_gft}
\ft{f}{\mu'}\doteq \int f(c) \ch{-\mu'}{c}
\,\text{d}c
\end{equation*}
is an isomorphism $L^2(\RB,\text{d}c)\rightarrow
L^2(\RBd,\text{d}\mu)$. The characters
$\ch{\mu}{\cdot}$ form an uncountable orthonormal basis in $\mathcal{H} =
L^2(\RB,\text{d}c)$, which is therefore inseparable. We will also use
the Hilbert spaces $L^2(\RB\times\RBd,\text{d}c\,\text{d}\mu)$ and
$L^2(\RBd\times\RBd,\text{d}\mu\,\text{d}\mu)$ which are isomorphic
under the partial Fourier transform in the first variable
\begin{equation*}
\widehat{F}(\nu,\mu) = \int_{\RB} F(c,\mu)\ch{-\nu}{c}\,\text{d}c.
\end{equation*}

Let us define some further function spaces (which roughly correspond to
the Schwartz spaces occuring in the theory on $\R$).
\begin{defi}
Denote by
\begin{align*}
\cyl &:\quad \text{the \emph{finite} span of
characters on $\RB$},\\
\cyld &:\quad\text{the image of $\cyl$
under Fourier transform},\\
\cylcyld &:\quad\text{the algebraic tensor product $\cyl\otimes\cyld$.}
\end{align*}
\end{defi}

Some remarks about this definition: first, any element $\Psi\in\cyl$ may be
written as a finite sum
\begin{equation*}
\Psi = \sum_{\mu\in\R} \ft{\Psi}{\mu} h_\mu
\end{equation*}
where the Fourier coefficients
\begin{equation*}
\ft{\Psi}{\mu} = \int \Psi(c)\overline{h_\mu(c)}\,dc
\end{equation*}
of $\Psi$ vanish for all but finitely many $\mu\in\R$. Accordingly,
$\cyld$ consists of all complex-valued functions on $\R$ which are
nonzero only at finitely many points. Second, $\cylcyld$,
the finite span of functions $f_1\otimes f_2$ on
$\RB\times\RBd$ with $f_1\in\cyl$, $f_2\in\cyld$, can be described
equivalently as
the set of functions $f(c,\lambda)$ on $\RB\times\RBd$ that
are in $\cyl$ for fixed $\lambda$ and in $\cyld$
for fixed $c$. $\cylcyld$ is a $*$-algebra under pointwise linear
combination, products and complex conjugation; its elements are
absolutely integrable with respect to the product measure $dc\,d\mu$ on
$\RB\times\RBd$ and hence can be integrated as nested integrals in
either order.

We shall also make use of the algebraic duals of these spaces, which can
be easily characterised: the dual $\cyldd$ of $\cyld$ consists of all
functionals $f:\cyld\to\C$ of the form
\begin{equation*}
f(\Xi)\doteq \sum_\mu f_\mu\Xi_\mu, \qquad(\Xi\in\cyld)
\end{equation*}
where $\mu\mapsto f_\mu$ is any complex-valued function on $\R$; while
$\cyl^*$ is the image of $\cyldd$ under the dual of the Fourier
transform. Thus $\cyl^*$ consists of all functionals $\Gamma:\cyl\to\C$ of
the form
\begin{equation*}
\Gamma(\Phi) \doteq \sum_\mu \widehat{\Gamma}_\mu \widehat{\Phi}_{-\mu}
\qquad(\Phi\in\cyl)
\end{equation*}
where the Fourier coefficients $\widehat{\Gamma}_\mu=\Gamma(h_{-\mu})$ form
an arbitrary complex-valued function $\mu\mapsto \widehat{\psi}_\mu$ on
$\R$. The Fourier transform then extends to a map from $\cyl^*$ to
$\cyldd$ so that
\begin{equation*}
\widehat{\Gamma}(\Xi) = \sum_\mu \widehat{\Gamma}_\mu \Xi_\mu \qquad(\Xi\in\cyld),
\end{equation*}
whereupon the Parseval identity holds in the form
\begin{equation*}
\Gamma(\overline{\Phi}) = \widehat{\Gamma}(\overline{\widehat{\Phi}})\qquad(\Gamma\in\cyl^*,\Phi\in\cyl).
\end{equation*}

There is a particular class of distributions over
$\cyl$ whose action can be expressed in terms of the restriction of
cylindrical functions to the real line.

\begin{lemm}
\label{le_reduction}
Suppose $\Gamma\in\cyl^*$ has the property that
$\widehat{\Gamma}:\mu\mapsto\widehat{\Gamma}_\mu$ is the Fourier
transform of a finite complex measure $\rho$ on $\R$,
\begin{equation*}
\widehat{\Gamma}_\mu = \int_\R e^{-i\mu x} \text{d}\rho(x)
\end{equation*}
(in particular, this holds if $\mu\mapsto\widehat{\Gamma}_\mu$ is a Schwartz
function on $\R$).
Then the action of $\Gamma$ on any $\Psi\in\cyl$ is
\begin{equation}\label{eq_reduction_identity}
\Gamma(\Psi) = \int_\R \Psi|_\R(x) \, \text{d}\rho(x)
\end{equation}
which may be written
\begin{equation*}
\Gamma(\Psi) = \int_\R (\mathcal{F}^{-1}\widehat{\Gamma})(x)\Psi|_\R(x) \, \text{d}x
\end{equation*}
if $\mu\mapsto\widehat{\Gamma}_\mu$ is of Schwartz class,
where $\mathcal{F}$ is the usual Fourier transform on $\R$, $(\mathcal{F}\psi)(p)=\ftn{\psi}(p)$, and the
measure $\text{d}x$ is the usual Lebesgue measure on $\R$.
\end{lemm}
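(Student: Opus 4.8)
The plan is to unwind both sides using the explicit description of $\cyl^*$ recorded above, so that the identity reduces to an elementary interchange of a finite sum with an integral. First I would write an arbitrary $\Psi\in\cyl$ as the finite sum $\Psi=\sum_\mu \ft{\Psi}{\mu}h_\mu$, in which only finitely many Fourier coefficients $\ft{\Psi}{\mu}$ are nonzero, and recall that the action of $\Gamma$ is $\Gamma(\Psi)=\sum_\mu \ft{\Gamma}{\mu}\ft{\Psi}{-\mu}$. Substituting the hypothesis $\ft{\Gamma}{\mu}=\int_\R e^{-i\mu x}\,\text{d}\rho(x)$ gives
\begin{equation*}
\Gamma(\Psi)=\sum_\mu \ft{\Psi}{-\mu}\int_\R e^{-i\mu x}\,\text{d}\rho(x).
\end{equation*}

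Next, since the sum over $\mu$ has only finitely many nonzero terms, I may freely interchange it with the integral against $\rho$; no convergence argument beyond finiteness is needed, and $\rho$ being a finite measure guarantees that each integral is well defined. After the interchange I reindex $\mu\mapsto-\mu$ to obtain the integrand $\sum_\mu \ft{\Psi}{\mu}e^{i\mu x}$. The crucial observation is that this is precisely the restriction $\Psi|_\R$ to the densely embedded real line: each character $h_\mu$ restricts to $x\mapsto\ch{\mu}{x}=e^{i\mu x}$, so $\Psi|_\R(x)=\sum_\mu\ft{\Psi}{\mu}e^{i\mu x}$. This yields \eqref{eq_reduction_identity} directly.

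For the second formula I would invoke Fourier inversion on $\R$. When $\mu\mapsto\ft{\Gamma}{\mu}$ is of Schwartz class its inverse Fourier transform $g=\mathcal{F}^{-1}\widehat{\Gamma}$ is again Schwartz, hence integrable, so that $\text{d}\rho=g\,\text{d}x$ defines a finite complex measure; inversion then returns $\ft{\Gamma}{\mu}=\int_\R e^{-i\mu x}g(x)\,\text{d}x$, confirming that the Schwartz case is a special instance of the finite-measure hypothesis. Substituting $\text{d}\rho=g\,\text{d}x$ into \eqref{eq_reduction_identity} produces the stated expression with $g=\mathcal{F}^{-1}\widehat{\Gamma}$.

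Since every step reduces to manipulation of finite sums and a single application of Fourier inversion, there is no serious analytic obstacle. The only point demanding care is the bookkeeping of sign and index conventions — in particular confirming that the reindexing in $\Gamma(\Psi)=\sum_\mu\ft{\Gamma}{\mu}\ft{\Psi}{-\mu}$ turns the integrand into $\Psi|_\R$ itself rather than its reflection or complex conjugate — together with verifying that the Schwartz hypothesis is indeed consistent with, and subsumed by, the finite-measure hypothesis.
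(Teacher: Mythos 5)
Your proposal is correct and is essentially the paper's own argument: expanding $\Psi$ in finitely many characters and interchanging the finite sum with the $\rho$-integral is just the explicit form of the paper's one-line reduction by linearity to $\Psi=h_{-\mu}$, for which both sides equal $\widehat{\Gamma}_\mu$. Your verification of the Schwartz case (setting $\text{d}\rho=(\mathcal{F}^{-1}\widehat{\Gamma})\,\text{d}x$ and invoking Fourier inversion with the paper's sign convention) is likewise the intended, routine check.
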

\begin{proof} By linearity it is enough to consider the case
$\Psi(c)=\ch{-\mu}{c}$, for which $\Psi|_\R(x)=e^{-i\mu x}$. By
definition, the left-hand side of \eqref{eq_reduction_identity} is
$\widehat{\Gamma}_\mu$, and the result follows.
\end{proof}
In other words, this type of distribution over $\cyl$ acts on a
cylindrical function as integration of the cylindrical function
restricted to $\R$ with respect to a measure on $\R$.

For later use, we note a simple application of Bochner's theorem.
\begin{lemm}
\label{le_Bochner}
Under the hypotheses of Lemma~\ref{le_reduction}, if $\Gamma$ is
positive (i.e., $\Gamma(\Psi)\ge 0$ for all pointwise nonnegative
$\Psi\in\cyl$) then $\rho$ is a finite positive measure.
\end{lemm}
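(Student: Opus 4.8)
The plan is to read off the positivity of $\rho$ from Bochner's theorem. We are given that $\mu\mapsto\widehat{\Gamma}_\mu$ is the Fourier transform of the finite complex measure $\rho$, so it is automatically bounded and continuous (by dominated convergence). Bochner's theorem asserts that a continuous function on $\R$ is the Fourier transform of a finite \emph{positive} measure precisely when it is positive-definite, and the finite measure representing a given transform is unique. It therefore suffices to establish that $\mu\mapsto\widehat{\Gamma}_\mu$ is positive-definite: the positive measure then produced by Bochner's theorem must coincide with the given $\rho$, which is consequently positive.

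The positive-definiteness will come directly from testing $\Gamma$ against squared moduli of cylindrical functions. Given finitely many $\mu_1,\dots,\mu_n\in\R$ and $z_1,\dots,z_n\in\C$, I would form $f=\sum_j z_j h_{\mu_j}\in\cyl$ and consider $\Psi=\betr{f}^2=f\overline{f}$. Since $\cyl$ is an algebra closed under complex conjugation (conjugates and products of characters are again characters, using $\overline{h_\mu}=h_{-\mu}$ and $h_\mu h_\nu=h_{\mu+\nu}$), $\Psi$ is again cylindrical, and it is pointwise nonnegative, so the positivity hypothesis gives $\Gamma(\Psi)\ge 0$. Expanding, $\Psi=\sum_{j,k} z_j\overline{z_k}\,h_{\mu_j-\mu_k}$; applying $\Gamma$ and using $\widehat{\Gamma}_\mu=\Gamma(h_{-\mu})$ turns $\Gamma(\Psi)\ge 0$ into $\sum_{j,k} z_j\overline{z_k}\,\widehat{\Gamma}_{\mu_k-\mu_j}\ge 0$. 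After relabelling the summation indices this is exactly the positive-definiteness inequality $\sum_{j,k} c_j\overline{c_k}\,\widehat{\Gamma}_{\mu_j-\mu_k}\ge 0$ with $c_j=\overline{z_j}$, and as the $z_j$ range over all of $\C$ so do the $c_j$; hence $\mu\mapsto\widehat{\Gamma}_\mu$ is positive-definite.

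Combining the two ingredients, Bochner's theorem supplies a finite positive measure whose Fourier transform equals $\mu\mapsto\widehat{\Gamma}_\mu$, and uniqueness of the representing measure identifies it with $\rho$, so $\rho\ge 0$. I expect no serious obstacle here: the substance lies entirely in the elementary computation of $\Gamma(\betr{f}^2)$ and in arranging the resulting inequality in the canonical positive-definite form. The only points needing a little care are the bookkeeping of the index relabelling together with the sign in $\widehat{\Gamma}_\mu=\Gamma(h_{-\mu})$, and the explicit appeal to injectivity of the Fourier transform on finite measures, which is what licenses identifying the Bochner measure with the originally given $\rho$.
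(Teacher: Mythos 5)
Your proposal is correct and follows essentially the same route as the paper: both test $\Gamma$ on squared moduli $\bigl|\sum_j z_j h_{\mu_j}\bigr|^2$ of finite character sums to show that the Fourier transform of $\rho$ is of positive type, and then invoke Bochner's theorem to conclude positivity of $\rho$. The only cosmetic differences are that you compute $\Gamma(\Psi)$ directly via $\widehat{\Gamma}_\mu=\Gamma(h_{-\mu})$ rather than through Lemma~\ref{le_reduction}, and you make explicit the injectivity of the Fourier--Stieltjes transform (needed to identify the Bochner measure with the given $\rho$), a step the paper leaves implicit.
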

\begin{proof} Considering any cylindrical function of the form $\Psi(c)=\left|\sum_{i=1}^N
\xi_i h_{\lambda_i}(c)\right|^2$, and applying Lemma~\ref{le_reduction}, we see that
\begin{equation*}
0\le \Gamma(\Psi) = \sum_{i,j=1}^N \overline{\xi}_i\xi_j \int_\R e^{i(\lambda_i-\lambda_j)x}\text{d}\rho(x)
\end{equation*}
Thus the Fourier transform of $\rho$ is a function of positive type, and
hence $\rho$ is a positive measure by Bochner's theorem (Theorem IX.9 in
\cite{ReedSimonII}).
\end{proof}

Finally we define the operators $\op{h}_\mu$ ($\mu\in\R$) and $\op{p}$
by their actions
\begin{equation*}
(\op{h}_\mu \Psi)(c)=\ch{\mu}{c} \Psi(c),\qquad
(\op{p}\Psi)(c) =\sum_\mu \mu \ft{\Psi}{\mu} h_{\mu}(c)
\end{equation*}
on wave functions $\Psi\in\cyl$; each $\op{h}_\mu$ extends to a bounded
operator on $L^2(\RB)$.
We note that $\op{h}_\mu$ and $\op{p}$ have the same
commutation relations as $\exp[i\mu\op{X}]$ and
$\op{P}$, $\op{X}$ and $\op{P}$ being the operators of the
Schr{\"o}dinger representation from
\eqref{eq_schroedinger}. In the latter case, however, the generator
$\op{X}$ may be recovered by differentiation
\begin{equation*}
\op{X}\Psi = - i \left.\frac{d}{d\mu}
\exp[i\mu\op{X}]\Psi\right|_{\mu=0};
\end{equation*}
here, however, $\mu\mapsto \op{h}_\mu\Psi$ is not differentiable on a
dense domain of $\Psi$ in $L^2(\RB)$. In other words, there is no operator on $L^2(\RB)$ corresponding
to the position operator $\op{X}$.

\subsection{The Wigner transform on $\RB$}
We now come to the definition of the Wigner
transform in this setting. At first glance, it is not clear how
to generalize the standard definition \eqref{eq_wig1}
on $\R$, because it is not clear how to divide an element
of $\RB$ by two. The equivalent expressions
\eqref{eq_wig3} and \eqref{eq_wig4} do not suffer from this problem. They contain
Fourier transforms, but those can be replaced by
the Fourier transform \eqref{eq_gft} on $\RB$.
\begin{defi}
For states $\Psi,\Psi'\in \cyl$ the Wigner transform is defined as a
complex-valued function on $\RB\times\RBd$ by
\begin{equation}
\label{eq_def} W(\Psi,\Psi')(c,\mu)\doteq
\int_{\RBd}
\ftcc{\Psi}{\mu'}\ft{\Psi'}{2\mu-\mu'}
\ch{2(\mu-\mu')}{c}\,\text{d}\mu'
\end{equation}
and may be written equivalently as
\begin{equation*}
W(\Psi,\Psi')(c,\mu)=
\int_{\RBd}
\ftcc{\Psi}{\mu-\nu/2}\ft{\Psi'}{\mu+\nu/2}
\ch{\nu}{c}\,\text{d}\nu.
\end{equation*}
\end{defi}
We remark that the second of these expressions is a direct analogue of
\eqref{eq_wig4} (apart from the normalising $2\pi$ factor), while the first is analogous to \eqref{eq_wig3}; the
factor of $2$ appearing in the latter expression arises from a Jacobian
determinant that is not needed in the present setting.
While this definition seems reasonable, its merits should
ultimately be found in its properties. So let us look at some of those, next.
\begin{prop}
The Wigner transform is a sesquilinear map
\begin{equation}
W:\cyl \times \cyl \longrightarrow \cylcyld \label{eq_map1}
\end{equation}
and extends to maps between the following spaces:
\begin{equation*}
W:\begin{array}{ccl} L^2(\RB) \times L^2 (\RB) &\longrightarrow & L^2 (\RB \times
\RBd)\\
\cyl^*\times\cyl^* &\longrightarrow &\cylcyld^*.
\end{array}
\end{equation*}
On $L^2(\RB)\times L^2(\RB)$ the Wigner transform has the overlap
property
\begin{equation}
\label{eq_norm} \iint \overline{W(\Psi_1,\Psi_2)}
W(\Phi_1,\Phi_2)\, \text{d}c \text{d}\mu
=\overline{\scpr{\Psi_1}{\Phi_1}}\scpr{\Psi_2}{\Phi_2}.
\end{equation}
and is hermitean:
\begin{equation}
\label{eq_herm}
\overline{W(\Psi,\Psi')}=W(\Psi',\Psi).
\end{equation}
Furthermore, for $\Psi\in\cyl$,
\begin{equation}
\label{eq_marginal2} \int W(\Psi,\Psi)(c,\mu)\,
\text{d}c =\betr{\ftn{\Psi}_\mu}^2,\qquad \int
W(\Psi,\Psi)(c,\mu)\, \text{d}\mu
=\betr{\Psi}^2(c).
\end{equation}
\end{prop}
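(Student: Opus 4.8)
The whole proposition lives most comfortably in the Fourier (momentum) picture, where $\RBd$ is just $\R$ equipped with the counting measure, so every $\int_{\RBd}\cdots\,\text{d}\mu'$ is a sum over $\R$. Throughout I would lean on four elementary facts: the characters multiply, $\ch{\mu}{c}\ch{\nu}{c}=\ch{\mu+\nu}{c}$ with $\overline{\ch{\mu}{c}}=\ch{-\mu}{c}$; the orthogonality relation $\int_{\RB}\ch{\lambda}{c}\,\text{d}c=\delta_{\lambda,0}$; the Parseval identity $\scpr{\Psi}{\Phi}=\sum_\mu\ftcc{\Psi}{\mu}\ft{\Phi}{\mu}$ coming from the fact that Fourier transform is an isomorphism onto $L^2(\RBd)$; and the observation that for fixed $\mu'$ the map $\mu\mapsto 2\mu-\mu'$ is a counting-measure-preserving bijection of $\R$. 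This last point is precisely the ``square root on $\RBd$'' highlighted in the introduction: the only halving that occurs lives on the momentum side, where it is harmless.

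\emph{Sesquilinearity and the image $\cylcyld$.} Antilinearity in $\Psi$ and linearity in $\Psi'$ are immediate from the definition. For $\Psi,\Psi'\in\cyl$ the coefficient functions $\ft{\Psi}{\cdot}$, $\ft{\Psi'}{\cdot}$ have finite support, so for fixed $\mu$ the function $c\mapsto W(\Psi,\Psi')(c,\mu)$ is a \emph{finite} linear combination of characters $\ch{2(\mu-\mu')}{\cdot}$, hence lies in $\cyl$; while for fixed $c$ the coefficient $\ftcc{\Psi}{\mu'}\ft{\Psi'}{2\mu-\mu'}$ is nonzero only when $2\mu\in\operatorname{supp}\ft{\Psi}{\cdot}+\operatorname{supp}\ft{\Psi'}{\cdot}$, a finite set, so $W(\Psi,\Psi')(c,\cdot)\in\cyld$. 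This gives \eqref{eq_map1}.

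\emph{Hermiticity and the overlap identity.} Hermiticity \eqref{eq_herm} is the substitution $\mu'\mapsto 2\mu-\mu'$ in the defining sum together with $\overline{\ch{\lambda}{c}}=\ch{-\lambda}{c}$. For the overlap \eqref{eq_norm} I would expand $\overline{W(\Psi_1,\Psi_2)}\,W(\Phi_1,\Phi_2)$, collect the two characters into a single $\ch{2(\mu'-\nu')}{c}$, and integrate over $c$ to produce $\delta_{\mu',\nu'}$. The remaining double sum over $\mu$ and $\mu'$ then factorises after the substitution $\lambda=2\mu-\mu'$ (for fixed $\mu'$) into $\bigl(\sum_{\mu'}\ftcc{\Phi_1}{\mu'}\ft{\Psi_1}{\mu'}\bigr)\bigl(\sum_\lambda\ftcc{\Psi_2}{\lambda}\ft{\Phi_2}{\lambda}\bigr)$, which is $\overline{\scpr{\Psi_1}{\Phi_1}}\scpr{\Psi_2}{\Phi_2}$ by Parseval. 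This is the one genuinely computational step, and the care needed is exactly in the change of variables and in checking that the sums decouple. Specialising to $\Psi_i=\Phi_i$ gives $\norm{W(\Psi,\Psi')}_{L^2(\RB\times\RBd)}=\norm{\Psi}\,\norm{\Psi'}$, so $W$ is bounded on the dense set $\cyl\times\cyl$; a routine Cauchy-sequence argument then extends both $W$ and the identity \eqref{eq_norm} to $L^2(\RB)\times L^2(\RB)$.

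\emph{Distributional extension and marginals.} To reach $\cyl^*\times\cyl^*\to\cylcyld^*$ I would pair $W(\Psi,\Psi')$ with a test function $F\in\cylcyld$ through $\iint W(\Psi,\Psi')F\,\text{d}c\,\text{d}\mu$; since $F$ is finitely supported in $\mu$ and has finite character support in $c$, the double sum over $(\mu,\mu')$ truncates to a finite sum that involves $\ft{\Psi}{\cdot}$, $\ft{\Psi'}{\cdot}$ only through finitely many coefficients. The same formula therefore makes sense for arbitrary $\Gamma,\Gamma'\in\cyl^*$ and defines $W(\Gamma,\Gamma')\in\cylcyld^*$. Finally the marginals \eqref{eq_marginal2} are direct: integrating \eqref{eq_def} over $c$ kills the character unless $\mu'=\mu$, leaving $\betr{\ft{\Psi}{\mu}}^2$; and summing the second form of $W$ over $\mu\in\R$, after the substitution that removes the $\nu/2$ shift, reproduces $\sum_{a,\nu}\ftcc{\Psi}{a}\ft{\Psi}{a+\nu}\ch{\nu}{c}=\overline{\Psi(c)}\Psi(c)$ via the inverse Fourier expansion $\Psi=\sum_\mu\ft{\Psi}{\mu}h_\mu$. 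The main obstacle throughout is bookkeeping rather than depth: keeping the halved momenta and the change of variables $\lambda=2\mu-\mu'$ straight so that everything collapses onto the Parseval identity.
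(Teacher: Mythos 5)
Your proof is correct, and for most of the statement it coincides with the paper's argument: the finite-support reasoning for \eqref{eq_map1}, the pairing against $F\in\cylcyld$ (with $\widehat F$ of finite support) to define the extension to $\cyl^*\times\cyl^*$, and the substitution $\mu'\mapsto 2\mu-\mu'$ for \eqref{eq_herm} and the character-orthogonality computations for \eqref{eq_marginal2}, which the paper dismisses as ``short calculations'' and which you have correctly supplied.

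Where you genuinely diverge is the $L^2$ part. You prove the overlap identity \eqref{eq_norm} directly on $\cyl\times\cyl$, using $\int_{\RB}\ch{2(\mu'-\nu')}{c}\,\text{d}c=\delta_{\mu',\nu'}$ and the counting-measure-preserving bijection $\mu\mapsto\lambda=2\mu-\mu'$ to decouple the sums into two Parseval pairings, and then extend $W$ together with \eqref{eq_norm} to $L^2(\RB)\times L^2(\RB)$ by the norm identity $\norm{W(\Psi,\Psi')}=\norm{\Psi}\,\norm{\Psi'}$ plus density of $\cyl$. The paper proceeds in the opposite order: it first builds the distributional extension, computes the partial Fourier transform $\widehat{W}(\Gamma,\Gamma')(\nu,\mu)=W(\Gamma,\Gamma')[h_{-\nu}\otimes\delta_\mu]=\ftcc{\Gamma}{\mu-\nu/2}\ft{\Gamma'}{\mu+\nu/2}$, observes that for $\Gamma,\Gamma'\in L^2(\RB)\subset\cyl^*$ this is square-summable because $(\nu,\mu)\mapsto(\mu-\nu/2,\mu+\nu/2)$ is a counting-measure-preserving bijection of $\R^2$, concludes $W(\Gamma,\Gamma')\in L^2(\RB\times\RBd)$ via the inverse partial Fourier transform, and obtains \eqref{eq_norm} from Plancherel. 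The paper's route buys two things: the $L^2$-valued transform is manifestly the restriction of the distributional one (in your version, that the density-limit agrees with the pairing formula would merit a one-line remark, though it is not required by the statement itself), and the intermediate identities --- the analogues of the paper's \eqref{eq_Wig_ip_dist} and \eqref{eq_wignerft} --- are reused later in the positivity results. Your route buys an elementary, hands-on verification of \eqref{eq_norm} that presupposes no distributional machinery; the Cauchy-sequence extension of a sesquilinear map bounded in this product sense is standard and your bookkeeping with the halved momenta is accurate throughout. Both arguments are sound.
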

\begin{proof}
First look at the assertion \eqref{eq_map1}. For $\Psi,\Psi'\in\cyl$ and any given $\mu$,
$\ftcc{\Psi}{\mu'}\ft{\Psi'}{2\mu-\mu'}$ is
nonzero for only finitely many $\mu'$, hence the
integral in \eqref{eq_def} amounts to a finite
sum, and thus for fixed $\mu$,
$W(\Psi,\Psi')(c,\mu)$ is in $\cyl$. On the other
hand $\ftcc{\Psi}{\mu'}\ft{\Psi'}{2\mu-\mu'}$ is
nonzero only for finitely many $\mu$, with $\mu'$
held fixed. So the integral amounts to a finite
sum of terms with finite support in $\mu$.  This proves \eqref{eq_map1}.

The extension to a map from $\cyl^*\times\cyl^*$ into $\cylcyld^*$ is
obtained as follows: for $\Psi,\Psi'\in\cyl$, and $F\in \cylcyld$ it is
easy to calculate
\begin{equation*}
\int_{\RB\times\RBd} W(\Psi,\Psi')(c,\mu) F(c,\mu) \text{d}c \text{d}\mu
=\int \ftcc{\Psi}{\mu+\nu/2}\ft{\Psi'}{\mu-\nu/2} \widehat{F}(\nu,\mu)
\text{d}\nu \text{d}\mu
\end{equation*}
where $\widehat{F}\in\cyld\otimes\cyld$ is the partial Fourier transform
\begin{equation*}
\widehat{F}(\nu,\mu)=\int_{\RB} F(c,\mu) \ch{-\nu}{c} \text{d}c
\end{equation*}
Noting that above expression converges even if $\Psi$ and $\Psi'$ are
replaced by elements of $\cyl^*$, we then define $W(\Gamma,\Gamma')$ for
$\Gamma,\Gamma'\in\cyl^*$ as the element of $\cylcyld^*$ with action
\begin{equation}\label{eq_dist_Wig_action}
W(\Gamma,\Gamma')[F] = \int \ftcc{\Gamma}{\mu+\nu/2}\ft{\Gamma'}{\mu-\nu/2} \widehat{F}(\nu,\mu)
\text{d}\nu \text{d}\mu \qquad (F\in\cylcyld).
\end{equation}
Short calculations shows that we have the properties
\begin{equation}\label{eq_Wig_ip_dist}
W(\Gamma,\Gamma')[\overline{W(\Psi,\Psi')}] =
\overline{\Gamma}(\Psi)\Gamma'(\overline{\Psi'})
\end{equation}
and
\begin{equation}\label{eq_wignerft}
\widehat{W}(\Gamma,\Gamma')(\nu,\mu)\doteq
W(\Gamma,\Gamma')[h_{-\nu}\otimes\delta_\mu]=
\ftcc{\Gamma}{\mu-\nu/2}\ft{\Gamma}{\mu+\nu/2}
\end{equation}
for $\Gamma,\Gamma'\in\cyl^*$, $\Psi,\Psi'\in\cyl$.

Restricting $\Gamma,\Gamma'$ to $L^2(\RB)$ (regarded as a subspace of
$\cyl^*$) it is easy to see that $\widehat{W}(\Gamma,\Gamma')(\nu,\mu)$
is square summable, so $\widehat{W}(\Gamma,\Gamma')\in
L^2(\RBd\times\RBd)$ is the partial Fourier transform of an element
$W(\Gamma,\Gamma')$ of $L^2(\RB\times\RBd)$. Thus the Wigner transform
maps $L^2(\RB)\times L^2(\RB)$ to $L^2(\RB\times\RBd)$, and \eqref{eq_norm} holds.

Properties \eqref{eq_herm} and \eqref{eq_marginal2} are
confirmed by a short calculation. The latter also holds modulo technical
refinements in the case $\Psi\in L^2(\RB)$, but we will not pursue this here.
\end{proof}
\subsection{Positivity properties}

In the standard setting, Hudson's theorem shows that the Wigner function
is not a probability distribution except for Gaussian states. Our
purpose in this subsection is to investigate this issue for the Wigner
functions of elements of $\cyl$ and $\cyl^*$. Again, Gaussians will play
an important role; however, these must now be treated as distributions
because they are not elements of $L^2(\RB)$. We define them as follows:
\begin{defi}
$\Gamma\in\cyl^*$ is called Gaussian if its
Fourier transform is of the form
\begin{equation}
\label{eq_gauss} \ft{\Gamma}{\mu}=\exp
[-a\mu^2+b\mu+c]
\end{equation}
where $a,b,c \in \C$ and Re$(a)>0$.
\end{defi}
This notion is justified because, according to
Lemma \ref{le_reduction},
\begin{equation}
\label{eq_g} \Gamma(\Psi)= \frac{1}{\sqrt{4 \pi
a}}\int_\R \exp[-(x-ib)^2/(4a) +c]\,
\Psi|_{\R}(x)\,\text{d}x
\end{equation}
for any
cylindrical function $\Psi$.

Just as in the standard theory, Gaussians have nice positivity properties:
\begin{prop}
\label{pr_positive}
For $\Gamma$ Gaussian, $W(\Gamma,\Gamma)$ is positive
in the sense that $W(\Gamma,\Gamma)[F]\ge 0$ for any pointwise nonnegative
$F\in\cylcyld$. Furthermore, equality holds if and only if $F=0$.
\end{prop}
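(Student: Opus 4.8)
The plan is to reduce the whole statement to the ordinary Wigner calculus on $\R$ together with Hudson's theorem (Proposition~\ref{pr_gauss}). The pivotal remark is that a Gaussian $\Gamma$, although only a distribution on $\cyl$, is the shadow of a genuine Schwartz function on $\R$. Indeed, for $\ft{\Gamma}{\mu}=\exp[-a\mu^2+b\mu+c]$ with $\re(a)>0$, formula~\eqref{eq_g} shows that $\Gamma$ acts by integration against
\begin{equation*}
g(x)=\frac{1}{\sqrt{4\pi a}}\exp[-(x-ib)^2/(4a)+c],
\end{equation*}
and $\re(a)>0$ forces $\re(1/(4a))>0$, so that $g\in\Sch(\R)$. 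A one-line Gaussian integral then gives $\ftn{g}(\mu)=\ft{\Gamma}{\mu}$: the ordinary Fourier transform of $g$ reproduces the Fourier coefficients of $\Gamma$.

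With this in hand I would establish the bridge between the two Wigner transforms. Since $\ft{\Gamma}{\mu}=\ftn{g}(\mu)$, the integrand appearing in the distributional action~\eqref{eq_dist_Wig_action} is $\ftcc{\Gamma}{\mu+\nu/2}\ft{\Gamma}{\mu-\nu/2}=\overline{\ftn{g}}(\mu+\nu/2)\ftn{g}(\mu-\nu/2)$, which by~\eqref{eq_wig4} is exactly the $x\to\nu$ Fourier transform of the ordinary Wigner function, $\int_\R W(g,g)(x,\mu)\,e^{ix\nu}\,\text{d}x$ (here $W(g,g)(\cdot,\mu)\in\Sch(\R)$ because $g\in\Sch(\R)$, so the integral converges). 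Substituting this into~\eqref{eq_dist_Wig_action} and moving the finite $\nu$-sum inside the $x$-integral, the $\nu$-sum reassembles the restriction of $F$ to the real axis, $\sum_\nu \wh{F}(\nu,\mu)\,e^{ix\nu}=F|_{\R}(x,\mu)$. This produces the clean formula
\begin{equation*}
W(\Gamma,\Gamma)[F]=\sum_\mu\int_\R W(g,g)(x,\mu)\,F|_{\R}(x,\mu)\,\text{d}x,
\end{equation*}
which is simply the two-variable incarnation of the reduction identity of Lemma~\ref{le_reduction}.

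Positivity is then immediate: $W(g,g)\ge 0$ everywhere by Hudson's theorem, as $g$ is a nonzero Gaussian, while $F|_{\R}\ge 0$ because $F$ is pointwise nonnegative on all of $\RB\times\RBd$ and $\R$ is embedded in $\RB$; hence the integrand, and therefore $W(\Gamma,\Gamma)[F]$, is nonnegative. For the equality case I would invoke the stronger fact that the Wigner function of a Gaussian is \emph{strictly} positive everywhere (it is the exponential of a real quadratic form). If $W(\Gamma,\Gamma)[F]=0$ then each summand vanishes, and against the strictly positive, continuous weight $W(g,g)(\cdot,\mu)$ the nonnegative continuous function $F|_{\R}(\cdot,\mu)$ must vanish identically for every $\mu$; continuity is automatic since $F(\cdot,\mu)\in\cyl$ is a finite trigonometric sum. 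Linear independence of the characters $e^{i\nu x}$ then forces every coefficient $\wh{F}(\nu,\mu)$ to vanish, i.e.\ $F=0$.

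The genuine work sits in the bridge of the second paragraph: one must match the sign and $2\pi$ conventions of~\eqref{eq_wig4} carefully enough to recognise $\ftcc{\Gamma}{\mu+\nu/2}\ft{\Gamma}{\mu-\nu/2}$ as the Fourier transform of the standard Wigner function, and verify both $\ftn{g}(\mu)=\ft{\Gamma}{\mu}$ and $g\in\Sch(\R)$. Everything downstream is then essentially free, the only point demanding care being that pointwise nonnegativity of $F$ on the compactification $\RB$ does descend to its restriction to the dense copy of $\R$ — which it does, since that restriction is nothing but evaluation of $F$ at real points.
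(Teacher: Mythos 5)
Your proof is correct, and it takes a genuinely different route from the paper's. The paper works self-containedly inside the Bohr-compactified calculus: by linearity it reduces to $F=\Phi\otimes\delta_{\mu_0}$ with $\Phi\in\cyl$ pointwise nonnegative, computes $\ftcc{\Gamma}{\mu+\nu/2}\ft{\Gamma}{\mu-\nu/2}$ by completing the square, factors out the positive $\nu$-independent term $\exp[-2\re(a)\mu^2+2\re(b)\mu+2\re(c)]$, and applies Lemma~\ref{le_reduction} to the remaining Schwartz-class Gaussian in $\nu$, ending with a manifestly positive Gaussian integral of $\Phi|_\R$ that vanishes only for $\Phi=0$ --- in effect computing by hand the strictly positive weight that you identify abstractly as the classical Wigner function $W(g,g)(\cdot,\mu_0)$ of the Schwartz Gaussian $g$ representing $\Gamma$. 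Your bridge identity $W(\Gamma,\Gamma)[F]=\sum_\mu\int_\R W(g,g)(x,\mu)\,F|_\R(x,\mu)\,\text{d}x$ is sound (all sums are finite, so the interchanges are trivial, and your sign bookkeeping against \eqref{eq_wig4} is right), and it is in fact exactly the mechanism the paper itself uses later in the proof of the converse, Prop.~\ref{pr_Hudson}, where $\ftcc{\Gamma}{\mu+\nu/2}\ft{\Gamma}{\mu-\nu/2}$ is recognized as a Fourier transform of $W(\gamma,\gamma)(\cdot,\mu)$; so your approach buys conceptual economy, unifying the positivity result and its converse around one transfer formula, while the paper's direct computation buys self-containedness and the explicit Gaussian weight, and avoids invoking Hudson's theorem for what is only its easy direction. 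One small caveat: for the equality case Hudson gives only $W(g,g)\ge 0$, and you rightly fall back on the explicit fact that a Gaussian's Wigner function is everywhere strictly positive --- but that fact is proved by precisely the completing-the-square computation the paper performs, so the appeal to Hudson is ultimately dispensable. Your remaining details check out: $\ftn{g}=\wh{\Gamma}$ and $g\in\Sch(\R)$ hold since $\re(1/(4a))=\re(a)/(4|a|^2)>0$, and $F|_\R(\cdot,\mu)\equiv 0$ forces $F(\cdot,\mu)=0$ either by linear independence of the characters $e^{i\nu x}$, as you argue, or equivalently by density of $\R$ in $\RB$ together with continuity.
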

\begin{proof}
By linearity it is enough to prove this for $F$ of the form
$F=\Phi\otimes\delta_{\mu_0}$ for pointwise nonnegative $\Phi\in\cyl$.
Let $\Gamma$ be Gaussian as in \eqref{eq_gauss}. Then
\begin{align*}
\ftcc{\Gamma}{\mu+\nu/2}\ft{\Gamma}{\mu-\nu/2} &=
\exp\left[-a(\mu-\frac{\nu}{2})^2 -\overline{a}(\mu+\frac{\nu}{2})^2\right.\\
&\qquad\qquad\left. +b(\mu-\frac{\nu}{2})+\overline{b}(\mu+\frac{\nu}{2})
+2\re c\right]\\
&= f(\mu,a,b,c) \exp\left[-\frac{\re a}{2}\nu^2+i(2\im(a)\mu- \im (b))\nu\right]
\end{align*}
where
\begin{equation*}
f(\mu,a,b,c)= \exp[-2\re(a)\mu^2+2\re(b)\mu +2\re(c)]
\end{equation*}
is a \textit{positive} expression that does not depend on $\nu$, while the remaining factor is
in the Schwartz class. Using \eqref{eq_dist_Wig_action} and
$\widehat{F}(\nu,\mu)=\ft{\Phi}{\nu}\delta_{\mu\mu_0}$, we therefore have
\begin{equation}
W(\Gamma,\Gamma)[F] = f(\mu_0,a,b,c)\sum_{\nu\in\R} \exp\left[-\frac{a}{2}\nu^2+i (2\im(a)\mu_0-\im (b))
\nu\right]\widehat{\Phi}_\nu
\end{equation}
and applying Lemma \ref{le_reduction}, we have:
\begin{equation*}
W(\Gamma,\Gamma)[F]=\frac{f(\mu_0,a,b,c)}{\sqrt{2\pi\re(a)}}
\int_{\R}\exp\left[-\frac{(x+2\im(a)\mu_0-\im(b))^2}{2\re(a)}\right]\, \Phi|_{\R}(x)\,dx
\end{equation*}
which is manifestly positive as $\Phi\ge 0$, and vanishes if and only if
$\Phi=0$. Every pointwise positive element of $\cylcyld$ is a convex
combination of functions of the above form, which completes the proof.
\end{proof}

We now present a converse to this result, which is analogous to Hudson's
theorem (Prop.~\ref{pr_gauss}) and indeed makes use of the classical
result. The hypotheses can be weakened further, but we do not pursue
this for simplicity.

\begin{prop}\label{pr_Hudson}
Suppose $\gamma\in L^2(\R)\cap L^1(\R)\backslash\{0\}$ and define $\Gamma\in\cyl^*$
by $\widehat{\Gamma}_\mu=\ftn{\gamma}(\mu)$.
If $W(\Gamma,\Gamma)$ is positive then $\Gamma$ is a Gaussian.
\end{prop}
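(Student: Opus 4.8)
The plan is to reduce the positivity of $W(\Gamma,\Gamma)$ on $\RB\times\RBd$ to a positivity statement about the classical Wigner function of $\gamma$ on $\R$, and then invoke the classical Hudson theorem (Prop.~\ref{pr_gauss}). The bridge between the two settings is Lemma~\ref{le_reduction}, which lets distributions in $\cyl^*$ whose Fourier transform is a Schwartz (or measure) function act by integration against $\gamma$ restricted to $\R$. Since $\widehat{\Gamma}_\mu = \ftn{\gamma}(\mu)$ by hypothesis, the partial Fourier transform of the Wigner distribution is, by \eqref{eq_wignerft},
\begin{equation*}
\widehat{W}(\Gamma,\Gamma)(\nu,\mu) = \overline{\ftn{\gamma}}(\mu-\nu/2)\,\ftn{\gamma}(\mu+\nu/2),
\end{equation*}
which is exactly the integrand appearing in \eqref{eq_wig4} for the classical Wigner transform of $\gamma$ (up to the roles of the variables). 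This is the crucial structural coincidence: the "divide by two" that was problematic on $\RB$ itself is harmless on the dual $\R$, so the $\RB$-Wigner distribution literally encodes the ordinary Wigner function $W_{\mathrm{cl}}(\gamma,\gamma)$ of $\gamma\in L^2(\R)$.

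First I would make this identification precise. Testing $W(\Gamma,\Gamma)$ against $F=\Phi\otimes\delta_{\mu_0}$ with $\Phi\in\cyl$ pointwise nonnegative, and using \eqref{eq_dist_Wig_action} together with Lemma~\ref{le_reduction}, I expect to obtain an expression of the form
\begin{equation*}
W(\Gamma,\Gamma)[\Phi\otimes\delta_{\mu_0}] = \int_\R W_{\mathrm{cl}}(\gamma,\gamma)(x,\mu_0)\,\Phi|_\R(x)\,\mathrm{d}x,
\end{equation*}
where $W_{\mathrm{cl}}$ is the ordinary Wigner function on $\R^2$. The hypotheses $\gamma\in L^2(\R)\cap L^1(\R)$ ensure $\ftn{\gamma}$ is bounded and continuous, so that the $\nu$-sum defining $\widehat{W}$ converges and the reduction lemma applies to the relevant slice. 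Positivity of $W(\Gamma,\Gamma)$ then says that for every nonnegative $\Phi\in\cyl$ and every fixed $\mu_0\in\R$, the integral $\int_\R W_{\mathrm{cl}}(\gamma,\gamma)(x,\mu_0)\,\Phi|_\R(x)\,\mathrm{d}x \ge 0$.

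Next I would upgrade this to pointwise positivity of $W_{\mathrm{cl}}(\gamma,\gamma)$ on all of $\R^2$. Since the restrictions $\Phi|_\R$ of nonnegative cylindrical functions form a rich enough family of nonnegative test functions on $\R$ (finite sums of characters and their moduli-squared, which separate points and can approximate nonnegative bumps), the inequality forces $x\mapsto W_{\mathrm{cl}}(\gamma,\gamma)(x,\mu_0)\ge 0$ for each $\mu_0$, hence $W_{\mathrm{cl}}(\gamma,\gamma)\ge 0$ everywhere on $\R^2$. By Hudson's theorem (Prop.~\ref{pr_gauss}) applied to $\gamma\in L^2(\R)\setminus\{0\}$, this forces $\gamma$ to be a Gaussian, i.e.\ $\gamma(x)=\exp[-\alpha x^2+\beta x+\delta]$ with $\re(\alpha)>0$. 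Taking the Fourier transform, $\ftn{\gamma}(\mu)=\widehat{\Gamma}_\mu$ is again of Gaussian form $\exp[-a\mu^2+b\mu+c]$ with $\re(a)>0$, which is precisely the definition \eqref{eq_gauss} of a Gaussian in $\cyl^*$. Hence $\Gamma$ is Gaussian, as claimed.

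The main obstacle I anticipate is the density/approximation step: justifying that positivity against the specific test family $\{\Phi|_\R : 0\le\Phi\in\cyl\}$ really implies pointwise nonnegativity of the continuous function $W_{\mathrm{cl}}(\gamma,\gamma)(\cdot,\mu_0)$. One must check that these restrictions are dense enough (or generate, via convex combinations and limits, an adequate set of nonnegative test functions) and that $W_{\mathrm{cl}}(\gamma,\gamma)$ is continuous in $x$ so that its sign is controlled by such testing; continuity follows from $\gamma\in L^1\cap L^2$ via the $C_\infty$ mapping property in Prop.~1. A secondary technical point is confirming that $\gamma\in L^1\cap L^2$ (rather than merely $L^2$) is what legitimises applying Lemma~\ref{le_reduction}, since that lemma needs $\widehat{\Gamma}$ to be a Fourier transform of a finite measure—here guaranteed because $\ftn{\gamma}$ arises from $\gamma\,\mathrm{d}x$ with $\gamma\in L^1$.
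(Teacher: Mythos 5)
Your overall route coincides with the paper's: identify $\widehat{W}(\Gamma,\Gamma)(\nu,\mu)=\overline{\ftn{\gamma}}(\mu-\nu/2)\ftn{\gamma}(\mu+\nu/2)$ with the Fourier transform in $x$ of the classical Wigner function of $\gamma$, use Lemma~\ref{le_reduction} to obtain
\begin{equation*}
W(\Gamma,\Gamma)[\Phi\otimes\delta_\mu]=\int_\R \Phi|_\R(x)\,W(\gamma,\gamma)(-x,\mu)\,\text{d}x,
\end{equation*}
and then invoke Hudson's theorem once the classical Wigner function is known to be pointwise nonnegative. (The reflection $x\mapsto -x$, which you elide, is harmless.) You also use the hypotheses exactly as the paper does: $\gamma\in L^1(\R)$ makes each slice $W(\gamma,\gamma)(\cdot,\mu)$ integrable so that Lemma~\ref{le_reduction} applies, and $\gamma\in L^2(\R)$ gives continuity of the slices.

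However, there is a genuine gap at precisely the step you flag as ``the main obstacle'', and the mechanism you propose for closing it fails. The restrictions to $\R$ of pointwise nonnegative elements of $\cyl$ are nonnegative trigonometric polynomials $\sum_j a_j e^{i\mu_j x}$; these are almost periodic, so they \emph{cannot} approximate a compactly supported nonnegative bump: any such function of order $1$ near a point $x_0$ recurs to values of order $1$ on an unbounded set of approximate translates, and testing a signed integrable density against it picks up uncontrolled contributions from those recurrences. Hence ``separating points and approximating bumps'' is not available within this test family, and pointwise nonnegativity of $W(\gamma,\gamma)(\cdot,\mu_0)$ does not follow by a density argument. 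The paper closes this step differently, with Lemma~\ref{le_Bochner}: taking $\Phi=\left|\sum_i \xi_i h_{\lambda_i}\right|^2$, the positivity hypothesis shows that the Fourier transform of the finite measure $W(\gamma,\gamma)(-x,\mu)\,\text{d}x$ is a function of positive type, so the measure is positive by Bochner's theorem, and pointwise nonnegativity then follows from continuity of the slice. If you replace your density paragraph by this Bochner argument (which is exactly why the paper proves Lemma~\ref{le_Bochner} in advance), the remainder of your outline --- Hudson's theorem applied to $\gamma\in L^2(\R)\backslash\{0\}$, then Fourier transforming the Gaussian $\gamma$ to see that $\widehat{\Gamma}_\mu=\ftn{\gamma}(\mu)$ has the form \eqref{eq_gauss} --- matches the paper's proof.
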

\begin{proof}
We begin by noting that the usual Wigner function of $\gamma$, $W(\gamma,\gamma)(x,p)$,
has the property that
$W(\gamma,\gamma)(\cdot,\mu) \in L^1(\R)$ for each $\mu$ because $\gamma\in
L^1(\R)$, and is also continuous because $\gamma\in L^2(\R)$. Next,
observe that
\begin{align*}
\ftcc{\Gamma}{\mu+\nu/2}\ft{\Gamma}{\mu-\nu/2} &=
\int_\R e^{i(x+y)\nu/2} e^{i\mu (x-y)} \overline{\gamma}(x)\gamma(y)
\,\text{d}x\,\text{d}y \\
&=\int_\R e^{-ix\nu}W(\gamma,\gamma)(-x,\mu)\text{d}x
\end{align*}
which for each fixed $\mu\in\R$ exhibits the left-hand side as the Fourier transform of
a measure obtained from $W(\gamma,\gamma)$; this measure is finite
by the $L^1$ property mentioned above.
Accordingly, by Lemma~\ref{le_reduction},
\begin{equation}
W(\Gamma,\Gamma)[\Phi\otimes\delta_\mu] = \int_\R
\Phi|_\R(x) W(\gamma,\gamma)(-x,\mu)\text{d}x
\end{equation}
for any cylindrical function $\Phi$. As the left-hand side is
nonnegative for every $\mu\in\R$ and each positive $\Phi\in\cyl$,
it follows by Lemma~\ref{le_Bochner} and continuity that $W(\gamma,\gamma)$ is pointwise nonnegative.
As $\gamma\in L^2(\R)\backslash\{0\}$, Hudson's theorem entails that $\gamma$ and
hence $\ftn{\gamma}$ are Gaussian. Thus $\Gamma$ is a Gaussian element of
$\cyl^*$.
\end{proof}

We can also examine the positivity properties of Wigner functions of
cylindrical functions. Although $\cyl$ may be embedded in $\cyl^*$,
the resulting distributions do not satisfy the hypotheses of
Prop.~\ref{pr_Hudson}. Indeed, we may immediately observe that the
statement of Prop.~\ref{pr_Hudson} cannot hold for all elements of
$\cyl$:
\begin{lemm}
\label{le_weird}
The Wigner function for a pure
character $\Phi(c)\doteq a \ch{\mu}{c}$ is
positive,
\begin{equation*}
W(\Phi,\Phi) =|a|^2\delta_{\mu_0}(\mu).
\end{equation*}
\end{lemm}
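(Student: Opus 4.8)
The plan is to collapse the defining sum to a single term by first computing the Fourier transform of a pure character. Since $\Phi = a\,\ch{\mu_0}{\cdot}$ is itself a finite span of characters it lies in $\cyl$, so that Lemma~\ref{le_weird} sits within the scope of the Wigner transform \eqref{eq_def}. Using $\overline{\ch{\mu'}{c}}=\ch{-\mu'}{c}$ together with the normalisation $\int_{\RB}\ch{\mu}{c}\,\text{d}c=\delta_{\mu,0}$, I would compute
\begin{equation*}
\ft{\Phi}{\mu'} = a\int_{\RB}\ch{\mu_0-\mu'}{c}\,\text{d}c = a\,\delta_{\mu_0,\mu'},
\end{equation*}
so that $\ft{\Phi}{\mu'}$ is supported at the single point $\mu'=\mu_0$.

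Next I would substitute this into \eqref{eq_def}. Because $\text{d}\mu'$ is the counting measure on $\R$, the integral becomes the sum
\begin{equation*}
W(\Phi,\Phi)(c,\mu) = \sum_{\mu'\in\R}\ftcc{\Phi}{\mu'}\,\ft{\Phi}{2\mu-\mu'}\,\ch{2(\mu-\mu')}{c}.
\end{equation*}
The factor $\ftcc{\Phi}{\mu'}$ forces $\mu'=\mu_0$, while $\ft{\Phi}{2\mu-\mu'}$ forces $2\mu-\mu'=\mu_0$; these two conditions are compatible only when $\mu=\mu_0$, in which case exactly one term survives. At that term the character has label $2(\mu-\mu')=0$, so $\ch{2(\mu-\mu')}{c}=\ch{0}{c}=1$, and the coefficient is $\overline a\,a=|a|^2$, independent of $c$. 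Collecting these observations yields $W(\Phi,\Phi)(c,\mu)=|a|^2\,\delta_{\mu_0,\mu}$, which is precisely the claimed $|a|^2\delta_{\mu_0}(\mu)$, where $\delta_{\mu_0}\in\cyld$ is the indicator of $\{\mu_0\}$. Positivity is then immediate, since $|a|^2\delta_{\mu_0}(\mu)\ge 0$.

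There is no genuine obstacle here: the computation is entirely routine once one is careful with the bookkeeping. The only points deserving attention are the interpretation of $\text{d}\mu'$ as counting measure, so that the ``integral'' is really a sum of Kronecker deltas, and the fact that the two delta constraints jointly pin down $\mu=\mu_0$; the residual dependence on $c$ disappears because the surviving character carries label $0$. This also illustrates why the hypotheses of Prop.~\ref{pr_Hudson} fail for such $\Phi$: the resulting $W(\Phi,\Phi)$ is a positive distribution supported on a single momentum slice, even though $\Phi$ is far from Gaussian.
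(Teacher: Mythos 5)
Your proof is correct and follows essentially the same route as the paper's: both compute that $\ft{\Phi}{\mu'}$ is supported at the single point $\mu_0$, so the counting-measure integral in \eqref{eq_def} collapses under the two Kronecker constraints $\mu'=\mu_0$ and $2\mu-\mu'=\mu_0$, forcing $\mu=\mu_0$ and leaving the trivial character $\ch{0}{c}=1$ with coefficient $|a|^2$. Your extra explicit computation of the Fourier coefficient and the closing remark about Prop.~\ref{pr_Hudson} are fine elaborations but do not change the argument.
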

\begin{proof}
Let $\Phi=h_{\mu_0}$. Then
\begin{equation}
W(\Phi,\Phi) = \int_{\RBd}
\ftcc{\Phi}{\mu'}\ft{\Phi}{2\mu-\mu'}
\ch{2(\mu-\mu')}{c}\,\text{d}\mu'.
\end{equation}
The first factor in the integrand vanishes unless $\mu'=\mu_0$, so
\begin{equation}
W(\Phi,\Phi) = \ftcc{\Phi}{\mu_0}\ft{\Phi}{2\mu-\mu_0}
\ch{2(\mu-\mu_0)}{c}
\end{equation}
and we see that the second factor in this expression vanishes unless
$2\mu-\mu_0=\mu_0$, i.e., $\mu=\mu_0$. Thus
\begin{equation}
W(\Phi,\Phi)(c,\mu) = |\widehat{\Phi}_{\mu_0}|^2\ch{0}{c}\delta_{\mu_0}(\mu)
=|a|^2\delta_{\mu_0}(\mu)\ge 0,
\end{equation}
so $W(\Phi,\Phi)$ is a positive element of $\cylcyld$.
\end{proof}
This is at first very surprising. After all, the
characters are rather quantum mechanical states
with `infinite uncertainty' for multiplication
operators on $\RB$. On the other hand, they are
eigenstates for the operator $\op{p}$, and in fact one
should interpret them as `degenerate Gaussian
states' as follows: In the usual setting of
analysis on $\R$, consider a Gaussian on Fourier
space and its Fourier transform on position
space. We are interested in the limit of bringing
its width in Fourier space to zero, while keeping
its integral fixed. The limit is not a square
integrable function, neither in Fourier- nor in
position space: In Fourier space, it is the delta
distribution, in position space it is a function
of constant modulus. For analysis on $\RB$, the
situation is however drastically different. The
limit is well defined, giving a Kronecker delta
on $\RBd$ and a character on $\RB$. Given this,
the statement of Lemma \ref{le_weird} is perhaps
less surprising.

Next we present a converse
to Prop. \ref{pr_positive}:
\begin{prop} Suppose that $\Phi\in\cyl$. The following are
equivalent:
\begin{enumerate}
\item $W(\Gamma,\Gamma)(\overline{W(\Phi,\Phi)})>0$ for all Gaussians
$\Gamma\in\cyl^*$;
\item $\Phi$ is of the form $\Phi=ah_{\mu}$ for some
$\mu$, i.e., a scalar multiple of a character.
\end{enumerate}
In particular, $W(\Phi,\Phi)$ is pointwise nonnegative for $\Phi\in\cyl$ if and only if
$\Phi$ is a scalar multiple of a character.
\end{prop}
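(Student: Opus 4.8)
The plan is to reduce the first condition to an explicit statement about a finite exponential sum, using the distributional overlap identity \eqref{eq_Wig_ip_dist}. First I would set $\Gamma'=\Gamma$ and $\Psi=\Psi'=\Phi$ in \eqref{eq_Wig_ip_dist}, obtaining
\begin{equation*}
W(\Gamma,\Gamma)[\overline{W(\Phi,\Phi)}]=\overline{\Gamma}(\Phi)\,\Gamma(\overline{\Phi}),
\end{equation*}
where $\overline{\Gamma}(\Psi):=\overline{\Gamma(\overline{\Psi})}$. Writing the finite sum $\Phi=\sum_\mu\ft{\Phi}{\mu}h_\mu$, using $\overline{h_\mu}=h_{-\mu}$ and the pairing $\Gamma(\Psi)=\sum_\mu\ft{\Gamma}{\mu}\ft{\Psi}{-\mu}$, a short computation gives $\Gamma(\overline{\Phi})=\sum_\mu\ft{\Gamma}{\mu}\ftcc{\Phi}{\mu}$ and hence
\begin{equation*}
W(\Gamma,\Gamma)[\overline{W(\Phi,\Phi)}]=\Bigl|\sum_\mu\ftcc{\Gamma}{\mu}\ft{\Phi}{\mu}\Bigr|^2\ge 0.
\end{equation*}
Thus the first condition is equivalent to $\sum_\mu\ftcc{\Gamma}{\mu}\ft{\Phi}{\mu}\neq0$ for every Gaussian $\Gamma$. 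Since $\ftcc{\Gamma}{\mu}$ runs over all exponentials $\exp[-a\mu^2+b\mu+c]$ with $\re a>0$ as $\Gamma$ runs over the Gaussians, and the constant factor $e^{c}$ is immaterial, this is in turn equivalent to
\begin{equation*}
\sum_{j=1}^N\ft{\Phi}{\mu_j}\exp[-a\mu_j^2+b\mu_j]\neq0\qquad\text{for all }a,b\in\C\text{ with }\re a>0,
\end{equation*}
where $\mu_1<\dots<\mu_N$ are the finitely many frequencies with $\ft{\Phi}{\mu_j}\neq0$.

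The implication from the second condition to the first is then immediate: a scalar multiple of a single character has $N=1$, so the sum consists of one never-vanishing term. For the reverse implication I would argue contrapositively. If $\Phi=0$ then $W(\Phi,\Phi)=0$ and the first condition fails trivially. Otherwise, assume $\Phi$ is not a scalar multiple of a character, so $N\ge2$. Fixing any real $a>0$ and writing $d_j=\ft{\Phi}{\mu_j}e^{-a\mu_j^2}\neq0$, it suffices to exhibit some $b\in\C$ with $\sum_{j=1}^N d_je^{\mu_j b}=0$; the corresponding Gaussian (with $c=0$) then annihilates the sum above and violates the first condition.

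The crux, and the step I expect to be the main obstacle, is to show that the entire function $F(b)=\sum_{j=1}^N d_je^{\mu_j b}$ with $N\ge2$ and every $d_j\neq0$ must have a zero in $\C$. For $N=2$ this is elementary: $F(b)=0$ reduces to $e^{(\mu_2-\mu_1)b}=-d_1/d_2$, which is always solvable. In general I would appeal to Hadamard factorization. The bound $\betr{F(b)}\le\bigl(\sum_j\betr{d_j}\bigr)e^{M\betr{b}}$, with $M=\max_j\betr{\mu_j}$, shows that $F$ has order at most $1$; were $F$ zero-free it would therefore equal $e^{\alpha b+\beta}$ for constants $\alpha,\beta\in\C$. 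But the exponentials $\{e^{\mu b}\}_{\mu\in\R}$ are linearly independent, so a single exponential cannot coincide with an $N$-term sum of distinct frequencies when $N\ge2$. This contradiction forces $F$ to vanish somewhere, completing the reverse implication.

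Finally, the concluding assertion follows by combining this equivalence with Proposition~\ref{pr_positive}. If $\Phi$ is a scalar multiple of a character, then $W(\Phi,\Phi)\ge0$ pointwise by Lemma~\ref{le_weird}. Conversely, suppose $W(\Phi,\Phi)$ is pointwise nonnegative with $\Phi\neq0$. By \eqref{eq_herm} it is real, so $\overline{W(\Phi,\Phi)}=W(\Phi,\Phi)$, and by the second marginal in \eqref{eq_marginal2}, $\int W(\Phi,\Phi)(c,\mu)\,\text{d}\mu=\betr{\Phi}^2(c)\not\equiv0$, so $W(\Phi,\Phi)\neq0$. The strict-positivity clause of Proposition~\ref{pr_positive} then yields $W(\Gamma,\Gamma)[\overline{W(\Phi,\Phi)}]>0$ for every Gaussian $\Gamma$, which is precisely the first condition; hence $\Phi$ is a scalar multiple of a character.
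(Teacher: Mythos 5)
Your proposal is correct and is essentially the paper's own argument: you perform the same reduction via \eqref{eq_Wig_ip_dist} to the nonvanishing of the Gaussian-weighted exponential sum $\sum_\mu \ftcc{\Gamma}{\mu}\ft{\Phi}{\mu}$, and you invoke the same Hadamard-type theorem (a zero-free entire function of exponential type equals $e^{\alpha b+\beta}$) plus linear independence of exponentials, just as the paper does (citing Saks--Zygmund), with the same treatment of the converse and of the final pointwise-nonnegativity clause via Lemma~\ref{le_weird} and Prop.~\ref{pr_positive}. The only difference is organizational: you argue contrapositively, which spares you the paper's step of using boundedness on the real axis to pin down the surviving frequency as real --- though note that in your linear-independence step the exponent $\alpha$ may be complex, so you need independence of $\{e^{\lambda b}\}$ for distinct \emph{complex} $\lambda$, which is equally standard.
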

\begin{proof} We adapt the standard proof of Hudson's
theorem \cite{hudson}. First observe that with
the definitions of Wigner function for
$\Gamma\in\cyl^*$ and Fourier transform we have
\begin{equation}
W(\Gamma,\Gamma)(\overline{W(\Phi,\Phi)}) = |\Gamma(\overline{\Phi})|^2
\end{equation}
by \eqref{eq_Wig_ip_dist}. So property 1 implies that
$\Gamma(\overline{\Phi})$ is nonvanishing for any Gaussian $\Gamma$.

Now consider the family of Gaussians $\Gamma_z$ with Fourier transform
$(\widehat{\Gamma_z})_\mu = \exp(-\mu^2 -i\mu z)$. We have
\begin{equation}
G(z)\doteq \Gamma_z(\overline{\Phi}) = \sum_\mu \overline{\widehat{\Phi}}_\mu \exp(-\mu^2 -i\mu z)
\end{equation}
which is clearly an entire function of exponential type
(i.e., $|G(z)|\le Ae^{B|\textrm{Im}\, z|}$ for constants $A$, $B$).
As it is also nonvanishing by
the previous observation we may conclude by a result of
Hadamard (Theorem VIII.10 in \cite{Saks_Zygmund}) that $G(z)$ is the exponential
of a polynomial of at most first degree, i.e.,
\begin{equation}
G(z) = a e^{ibz}
\end{equation}
for complex constants $a\not=0$ and $b$.
Now the restriction of $G(z)$ to the real line is
bounded, so we may conclude that $b$ is real.

We therefore have
\begin{equation}
\sum_\mu \overline{\widehat{\Phi}}_\mu \exp(-\mu^2 -i\mu t) = a e^{ibt}
\end{equation}
for all real $t$, from which it follows that $\widehat{\Phi}_\mu$ is
nonzero only for $\mu=b$. Accordingly $\Phi$ is a scalar multiple of the
character $h_{b}$. Thus $1\implies 2$.

In the converse direction we set $\Phi=a
h_{\mu_0}$ and refer to Lemma \ref{le_weird},
which tells us that $W(\Phi,\Phi)$ is a positive
element of $\cylcyld$. Using Prop.
\ref{pr_positive} we therefore have
$W(\Gamma,\Gamma)(\overline{W(\Phi,\Phi)})>0$ for all
Gaussians $\Gamma$, and the equivalence of 1 and 2 is
established. The proof is concluded by remarking that if $W(\Phi,\Phi)$
is pointwise positive and not identically zero then $W(\Gamma,\Gamma)[W(\Phi,\Phi)]>0$
for all Gaussians $\Gamma$ by Prop.~\ref{pr_positive}, and hence $\Phi$
is a scalar multiple of a character; the converse is given by
Lemma~\ref{le_weird}. It is also trivial that $W(\Phi,\Phi)$ is
identically zero if and only if $\Phi=0$.
\end{proof}

It is worth remarking that the essential difference between this result
and the standard line of argument is that the support of
$\widehat{\Phi}$ is supposed to be bounded, which permits us to obtain
an exponential bound of first order. In the usual proof of Hudson's
theorem one does not have this luxury and the quadratic bound arises by
completing a square to bound the Gaussian term.

\subsection{Quantization}
Now we discuss the Wigner transform and
quantization. Consider a distribution
$\sigma\in\cylcyld^*$. It defines a
sesquilinear form $B_\sigma$ on $\cyl$ via
\begin{equation*}
B_\sigma(\Psi_1,\Psi_2) =\sigma(W(\Psi_1,\Psi_2)).
\end{equation*}
Motivated by Prop.\ \ref{pr_calculus} we ask the following question:
When is the form $B_\sigma$ induced by an operator $\op{\sigma}$
on $\cyl$? A partial answer to this question can be given as follows.

Let $\sigma$ be in $\cylcyld$.  Then we can
\textit{define} an operator $\op{\sigma}$ on $\cyl$ by
\begin{equation}
\label{eq_opdef} (\op{\sigma}\Psi)(c) \doteq
\iint_{\RBd\times\RBd}
\ft{\Psi}{\nu}\widehat{\sigma}(\mu-\nu,(\nu+\mu)/2)\ch{\mu}{c}
\,\text{d}\mu\text{d}\nu.
\end{equation}
This definition is justified by the following fact.
\begin{lemm}
The matrix elements of the operator $\op{\sigma}$ as defined by \eqref{eq_opdef}
are given by
\begin{equation*}
\scpr{h_\mu}{\op{\sigma}h_\nu}=B_\sigma(h_\mu,h_\nu)\qquad ( \equiv \sigma(W(h_\mu,h_\nu))).
\end{equation*}
In particular, $\op{\sigma}$ is
symmetric if $\sigma$ is real.
\end{lemm}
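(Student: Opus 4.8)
The goal is to verify that the operator $\op{\sigma}$ defined by the explicit formula \eqref{eq_opdef} reproduces, through its matrix elements in the character "basis," the sesquilinear form $B_\sigma(\Psi_1,\Psi_2)=\sigma(W(\Psi_1,\Psi_2))$. The plan is to compute the left-hand side $\scpr{h_\mu}{\op{\sigma}h_\nu}$ directly from the definition \eqref{eq_opdef}, and independently compute the right-hand side $\sigma(W(h_\mu,h_\nu))$ from the Wigner transform formula \eqref{eq_def}, and then check that the two expressions coincide. Since both sides are linear (resp. sesquilinear) and $\cyl$ is the finite span of the characters $h_\mu$, establishing the identity on characters will suffice; the symmetry claim will then follow by a short reality computation.

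**Computing the left-hand side.** First I would apply $\op{\sigma}$ to $h_\nu$ using \eqref{eq_opdef}. Since the Fourier transform of a character is a Kronecker delta, $\widehat{(h_\nu)}_{\nu'}=\delta_{\nu,\nu'}$, the inner integral over the momentum variable collapses to a single term, leaving $(\op{\sigma}h_\nu)(c)=\int_{\RBd}\widehat{\sigma}(\mu'-\nu,(\nu+\mu')/2)\,h_{\mu'}(c)\,\text{d}\mu'$. Taking the inner product with $h_\mu$ then picks out the coefficient of $h_\mu$, using the orthonormality $\scpr{h_\mu}{h_{\mu'}}=\delta_{\mu,\mu'}$, so that $\scpr{h_\mu}{\op{\sigma}h_\nu}=\widehat{\sigma}(\mu-\nu,(\nu+\mu)/2)$. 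The key point here is simply that evaluating everything on characters turns both integrals into the extraction of a single Fourier coefficient.

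**Computing the right-hand side and matching.** Independently, I would evaluate $\sigma(W(h_\mu,h_\nu))$. From Lemma~\ref{le_weird}'s style of computation — applied to the \emph{pair} $h_\mu,h_\nu$ rather than a single character — the Fourier-space expression $\ftcc{(h_\mu)}{\cdot}\ft{(h_\nu)}{\cdot}$ again forces delta supports, so that the partial Fourier transform $\widehat{W(h_\mu,h_\nu)}(\nu',\mu')$ concentrates on the single point determined by the two character labels; the cleanest route is to use \eqref{eq_wignerft}, which gives $\widehat{W}(h_\mu,h_\nu)(\nu',\mu')=\ftcc{(h_\mu)}{\mu'-\nu'/2}\ft{(h_\nu)}{\mu'+\nu'/2}$, nonzero only when $\mu'-\nu'/2=\mu$ and $\mu'+\nu'/2=\nu$, i.e. $\mu'=(\mu+\nu)/2$ and $\nu'=\nu-\mu$. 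Pairing this against $\sigma\in\cylcyld$ via the dual pairing expressed through $\widehat{\sigma}$ yields exactly $\widehat{\sigma}(\mu-\nu,(\nu+\mu)/2)$, matching the left-hand side. The only subtlety — and the step I expect to require the most care — is keeping the variable substitutions and the sign and ordering conventions in \eqref{eq_def}, \eqref{eq_wignerft}, and \eqref{eq_opdef} consistent, so that the arguments of $\widehat{\sigma}$ line up exactly; this is bookkeeping rather than a genuine obstacle.

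**The symmetry claim.** Finally, for the symmetry assertion I would observe that $\op{\sigma}$ is symmetric precisely when $\scpr{h_\mu}{\op{\sigma}h_\nu}=\overline{\scpr{h_\nu}{\op{\sigma}h_\mu}}$ for all $\mu,\nu$. By the matrix-element formula just established this reads $\widehat{\sigma}(\mu-\nu,(\nu+\mu)/2)=\overline{\widehat{\sigma}(\nu-\mu,(\nu+\mu)/2)}$, which is exactly the condition that $\sigma$ be real (the reality of $\sigma$ translating, under the partial Fourier transform defining $\widehat{\sigma}$, into this conjugation-plus-reflection symmetry in the first argument). Alternatively, and more conceptually, I would combine the hermiticity \eqref{eq_herm} of the Wigner transform with the reality of $\sigma$ to get $\overline{B_\sigma(\Psi_1,\Psi_2)}=\overline{\sigma(W(\Psi_1,\Psi_2))}=\sigma(\overline{W(\Psi_1,\Psi_2)})=\sigma(W(\Psi_2,\Psi_1))=B_\sigma(\Psi_2,\Psi_1)$, which is the statement that the form $B_\sigma$, and hence the operator $\op{\sigma}$ it represents, is symmetric.
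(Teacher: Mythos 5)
Your proposal is correct and takes essentially the same approach as the paper: the paper's one-line proof amounts to noting $B_\sigma(h_\mu,h_\nu)=\widehat{\sigma}(\mu-\nu,(\nu+\mu)/2)$, i.e.\ exactly your computation in which the Kronecker-delta Fourier coefficients of characters collapse both the operator formula \eqref{eq_opdef} and the pairing $\sigma(W(h_\mu,h_\nu))$ to this common value. Your symmetry argument via the conjugation--reflection identity for $\widehat{\sigma}$ under reality of $\sigma$ is likewise the same computation the paper records explicitly in the adjoint statement of its subsequent proposition on $\symb_\infty$.
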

The proof is straightforward, noting that
\begin{equation}
\label{eq_matrix}
B_\sigma(h_\mu,h_\nu)=\widehat{\sigma}(\mu-\nu,(\nu+\mu)/2).
\end{equation}
This answers the question, albeit only for symbols $\sigma$ in $\cylcyld$.

What about more general symbols? If $\sigma$ is only in
$\sigma\in\cylcyld^*$ it is not clear
whether the integration in \eqref{eq_opdef}
converges or not, so this seems to be too general. On the other hand there are
functions $\sigma$ that are not even in
$L^2(\RB\times\RBd)$, for which \eqref{eq_opdef} \emph{does}
make sense. For our purposes, it will be sufficient to restrict to
the symbol classes defined in the following result:
\begin{prop}
(a) Let $\symb \subset \cylcyld^*$ be the set of distributions $\sigma$
such that
\begin{equation}
\label{eq_crit}
\text{for any
fixed } \beta\in \RBd: \quad
M_{\alpha\beta}\doteq\widehat{\sigma}(\alpha-\beta,(\alpha+\beta)/2) \text{ is in  L}^1(\RBd).
\end{equation}
Then for any $\sigma\in\symb$, \eqref{eq_opdef} defines a (possibly
unbounded) operator $\op{\sigma}$ with
domain $\cyl$.\\
(b) Let $\symb_\infty\subset\symb$ be the set of distributions $\sigma$
for which there exist constants $A,B\ge 0$ such that
\begin{equation}
\sum_\beta |M_{\alpha\beta}|\le A \quad \text{for all $\alpha\in\RBd$}
\end{equation}
and
\begin{equation}
\sum_\alpha |M_{\alpha\beta}|\le B \quad \text{for all $\alpha\in\RBd$}.
\end{equation}
For each such $\sigma$, the operator $\op{\sigma}$ extends to a bounded operator
(also denoted $\op{\sigma}$) on $L^2(\RB)$ with
\begin{equation*}
\|\op{\sigma}\|\le \sqrt{AB}.
\end{equation*}
Moreover, the adjoint $\op{\sigma}^*$ is the quantization of
$\overline{\sigma}$.
\end{prop}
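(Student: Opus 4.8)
The plan is to prove the three assertions of the proposition in sequence, reducing everything to the explicit matrix elements computed in \eqref{eq_matrix}. The key observation is that by \eqref{eq_matrix} and \eqref{eq_opdef}, the matrix elements of $\op{\sigma}$ in the character ``basis'' are precisely $\scpr{h_\alpha}{\op{\sigma}h_\beta} = M_{\alpha\beta} = \widehat{\sigma}(\alpha-\beta,(\alpha+\beta)/2)$, so that everything can be phrased as a statement about the infinite matrix $(M_{\alpha\beta})_{\alpha,\beta\in\RBd}$ acting on sequences indexed by $\RBd=\R$. Concretely, for $\Psi\in\cyl$ with finitely many nonzero Fourier coefficients $\ft{\Psi}{\beta}$, the definition \eqref{eq_opdef} should be seen to give $\widehat{(\op{\sigma}\Psi)}_\alpha = \sum_\beta M_{\alpha\beta}\,\ft{\Psi}{\beta}$ after a change of integration variables $(\mu,\nu)\mapsto(\alpha,\beta)$ with $\alpha=\mu$, $\beta=\nu$ (the Jacobian being trivial for the counting measure).

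For part (a), I would argue as follows. Fix $\Psi\in\cyl$; then $\ft{\Psi}{\beta}$ is supported on a finite set $S\subset\R$. The output coefficient $\widehat{(\op{\sigma}\Psi)}_\alpha = \sum_{\beta\in S} M_{\alpha\beta}\,\ft{\Psi}{\beta}$ is therefore a finite sum for each $\alpha$, hence well-defined pointwise; the condition \eqref{eq_crit} is not even needed to make sense of each coefficient, but it guarantees that for each fixed $\beta$ the map $\alpha\mapsto M_{\alpha\beta}$ lies in $L^1(\RBd)$, which is what is required for the $\mu$-integral in \eqref{eq_opdef} to converge and reconstruct a genuine element of $\cyl^*$ (or at least a well-defined function). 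Thus \eqref{eq_opdef} defines a linear operator with domain $\cyl$, possibly unbounded. The main point to check carefully is that the formal manipulation turning \eqref{eq_opdef} into the matrix action is legitimate under \eqref{eq_crit}; I expect this to be the routine-but-delicate step, handled by Fubini on the finite-times-$L^1$ integrand.

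For part (b), the boundedness estimate $\|\op{\sigma}\|\le\sqrt{AB}$ is exactly the Schur test for boundedness of an integral/matrix operator: if $\sum_\beta|M_{\alpha\beta}|\le A$ uniformly in $\alpha$ and $\sum_\alpha|M_{\alpha\beta}|\le B$ uniformly in $\beta$, then the operator with kernel $M_{\alpha\beta}$ is bounded on $\ell^2(\RBd)$ with norm at most $\sqrt{AB}$. Since Fourier transform is an isometry $L^2(\RB)\to L^2(\RBd,\mathrm{d}\mu)$ and the latter is $\ell^2$ over $\R$ with the counting measure, this immediately gives the bound on $L^2(\RB)$, and the density of $\cyl$ in $L^2(\RB)$ lets us extend $\op{\sigma}$ continuously. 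I would either cite the Schur test or include its two-line proof via Cauchy--Schwarz: write $|M_{\alpha\beta}| = |M_{\alpha\beta}|^{1/2}|M_{\alpha\beta}|^{1/2}$ and split the weights as $A^{1/2}B^{-1/2}$-type factors to balance the two row/column sums.

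For the adjoint claim, I would compute matrix elements directly. Using hermiticity of the Wigner transform \eqref{eq_herm}, the symbol $\overline{\sigma}$ has matrix elements $\widehat{\overline{\sigma}}(\alpha-\beta,(\alpha+\beta)/2) = \overline{\widehat{\sigma}(\beta-\alpha,(\alpha+\beta)/2)} = \overline{M_{\beta\alpha}}$; on the other hand the adjoint $\op{\sigma}^*$ has matrix elements $\scpr{h_\alpha}{\op{\sigma}^* h_\beta} = \overline{\scpr{h_\beta}{\op{\sigma}h_\alpha}} = \overline{M_{\beta\alpha}}$. Matching these two, and noting that the symmetry of the $A,B$ bounds under $\alpha\leftrightarrow\beta$ means $\overline{\sigma}\in\symb_\infty$ whenever $\sigma\in\symb_\infty$, shows that $\op{\sigma}^*$ and $\boldsymbol{\overline{\sigma}}$ agree on the total set of characters and hence as bounded operators. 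The only point demanding care is the conjugation identity for $\widehat{\sigma}$ under complex conjugation of $\sigma$; I expect this to be the main (mild) obstacle, since it requires tracking how complex conjugation of a distribution in $\cylcyld^*$ interacts with the partial Fourier transform defining $\widehat{\sigma}$, but it follows from the definitions together with \eqref{eq_herm}.
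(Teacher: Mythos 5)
Your proposal is correct and takes essentially the same route as the paper: (a) by absolute summability of the matrix action on $\cyl$, (b) by the Schur test applied to the kernel $M_{\alpha\beta}$ on $\ell^2(\RBd)$, and the adjoint claim by the matrix-element computation $B_{\overline{\sigma}}(h_\mu,h_\nu)=\overline{B_\sigma(h_\nu,h_\mu)}$. The one cosmetic point is that the conjugation identity $\widehat{\overline{\sigma}}(\nu,\mu)=\overline{\widehat{\sigma}(-\nu,\mu)}$ follows from the definition of the partial Fourier transform and $\overline{h_{-\nu}}=h_\nu$ alone, so \eqref{eq_herm} is not actually needed there.
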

\begin{proof} (a) is trivial; while (b) is immediate from the Schur
test (Theorem 5.2 in~\cite{HalmosSunder}). The statement about the
adjoint follows on noting that
\begin{equation*}
B_{\overline{\sigma}}(h_\mu,h_\nu)=
\widehat{\overline{\sigma}}(\mu-\nu,(\mu+\nu)/2)
= \overline{\widehat{\sigma}(\nu-\mu,(\mu+\nu)/2)}
=\overline{B_{\sigma}(h_\nu,h_\mu)}.
\end{equation*}
\end{proof}
We remark that for example any function
$\sigma(c,\lambda)$ on $\RB\times\RBd$ which is
in $\cyl$ for fixed $\lambda$ belongs to $\symb$.

What does the quantization \eqref{eq_opdef} give?
Straightforward calculations show the following
\begin{lemm}
For $\sigma_1(c,\lambda)=\ch{\mu}{c}$ and
$\sigma_2(c,\lambda)=\lambda$
\begin{equation*}
\op{\sigma}_1 =\op{h}_\mu, \qquad
\op{\sigma}_2 = \op{p}.
\end{equation*}
For $\sigma_3(c,\lambda)=\lambda \ch{\mu}{c}$
\begin{equation*}
\op{\sigma_3}  = \frac{1}{2}(\op{h}_\mu \op{p}+ \op{p} \op{h}_\mu).
\end{equation*}
\end{lemm}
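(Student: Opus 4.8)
The plan is to bypass the integral definition \eqref{eq_opdef} entirely and work instead through the matrix-element formula \eqref{eq_matrix}. Since the characters $h_\alpha$ ($\alpha\in\R$) form an orthonormal basis of $\hs=L^2(\RB)$, and since each of the three candidate operators on the right-hand side maps $\cyl$ into $\cyl\subset\hs$, it suffices to verify that the matrix elements $\scpr{h_\alpha}{\op{\sigma_i}h_\beta}$ coincide with those of the claimed operators for all $\alpha,\beta\in\R$; equality of all matrix elements in an orthonormal basis then forces equality of the operators on $\cyl$. By \eqref{eq_matrix} the left-hand sides are nothing but $\widehat{\sigma_i}(\alpha-\beta,(\alpha+\beta)/2)$, where $\widehat{\sigma_i}$ is the partial Fourier transform in the first slot. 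Thus the whole computation is driven by evaluating three elementary partial Fourier transforms. To keep the labels straight I would relabel the fixed parameter appearing in $\sigma_1$ and $\sigma_3$ as $\mu_0$, reserving the symbols $\alpha,\beta$ for the character labels.

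First I would compute the transforms using the single identity $\int_{\RB}\ch{\kappa}{c}\,\text{d}c=\delta_{\kappa,0}$, which gives $\widehat{\sigma_1}(\kappa,\lambda)=\delta_{\kappa,\mu_0}$, $\widehat{\sigma_2}(\kappa,\lambda)=\lambda\,\delta_{\kappa,0}$, and $\widehat{\sigma_3}(\kappa,\lambda)=\lambda\,\delta_{\kappa,\mu_0}$. Substituting into \eqref{eq_matrix} and noting that on the support of each Kronecker delta the midpoint second argument $(\alpha+\beta)/2$ collapses --- to $\beta$ when $\alpha=\beta$, and to $\beta+\mu_0/2$ when $\alpha=\beta+\mu_0$ --- I would read off
\[
\scpr{h_\alpha}{\op{\sigma_1}h_\beta}=\delta_{\alpha,\beta+\mu_0},\qquad
\scpr{h_\alpha}{\op{\sigma_2}h_\beta}=\beta\,\delta_{\alpha,\beta},\qquad
\scpr{h_\alpha}{\op{\sigma_3}h_\beta}=\Bigl(\beta+\tfrac{\mu_0}{2}\Bigr)\delta_{\alpha,\beta+\mu_0}.
\]

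Next I would compute the matrix elements of the right-hand operators directly from their definitions, using $\op{h}_{\mu_0}h_\beta=h_{\mu_0+\beta}$, $\op{p}\,h_\beta=\beta h_\beta$, and orthonormality. For $\op{h}_{\mu_0}$ and $\op{p}$ this is immediate and reproduces the first two displayed expressions, giving $\op{\sigma_1}=\op{h}_{\mu_0}$ and $\op{\sigma_2}=\op{p}$. For the symmetrised product one finds $\scpr{h_\alpha}{\op{h}_{\mu_0}\op{p}\,h_\beta}=\beta\,\delta_{\alpha,\beta+\mu_0}$ and $\scpr{h_\alpha}{\op{p}\,\op{h}_{\mu_0}h_\beta}=(\beta+\mu_0)\,\delta_{\alpha,\beta+\mu_0}$, whose average is exactly $(\beta+\mu_0/2)\,\delta_{\alpha,\beta+\mu_0}$, matching $\op{\sigma_3}$.

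I do not expect a genuine obstacle here; the content is bookkeeping, and leaning on \eqref{eq_matrix} is far cleaner than grinding through \eqref{eq_opdef} directly. The one point I would double-check carefully is the $\sigma_3$ computation: the symmetric (Weyl) ordering is precisely what the midpoint argument $(\alpha+\beta)/2$ in the second slot of $\widehat{\sigma}$ encodes, and the factor $\mu_0/2$ arises because $\op{p}$ sees eigenvalue $\beta$ when it acts before the shift $\op{h}_{\mu_0}$ and eigenvalue $\beta+\mu_0$ when it acts after it, so that averaging reproduces $\beta+\mu_0/2$. Getting this factor of two right --- and staying consistent about the clash between the parameter named $\mu$ in the statement and the label variables in \eqref{eq_matrix} --- is the only place where a slip could occur.
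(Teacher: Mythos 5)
Your proposal is correct and is precisely the ``straightforward calculation'' that the paper leaves implicit: the intended route is the matrix-element formula \eqref{eq_matrix}, and your partial Fourier transforms, the midpoint collapses on the support of the Kronecker deltas, and the averaging $\tfrac{1}{2}\bigl(\beta+(\beta+\mu_0)\bigr)=\beta+\mu_0/2$ against $\op{h}_{\mu_0}\op{p}$ and $\op{p}\,\op{h}_{\mu_0}$ all check out. The one nicety worth recording is that $\sigma_2$ and $\sigma_3$ do not lie in $\cylcyld$ (they are unbounded in $\lambda$, hence not of finite support), so \eqref{eq_matrix} as stated does not literally apply to them; but the identity $\scpr{h_\alpha}{\op{\sigma}h_\beta}=\widehat{\sigma}(\alpha-\beta,(\alpha+\beta)/2)$ follows for these $\symb$-class symbols by the same one-line computation directly from \eqref{eq_opdef} and orthonormality of the characters, so this is cosmetic rather than a gap.
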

We note that the latter is the totally symmetric
ordering of $p\exp(i\mu x)$, thus one should
think of the quantization given by
\eqref{eq_opdef} as Weyl quantization. Weyl
ordering is interesting because of its properties
(see \cite{folland} for a discussion), and it has
become important in loop quantum cosmology
because it was used in a novel method for
obtaining effective equations of motion from the
quantum theory
\cite{Bojowald:2005cw,Bojowald:2006ww}.
\section{Application to the quantization of the Hamiltonian constraint}
\label{se_applications}
As we have pointed out above, the Wigner function
can be used to quantize phase space functions in
a systematic way. In the following we will apply
this technique in a case that is of importance to
LQC. More generally we expect that this formalism
will be useful in the context of obtaining
effective equations of motion from the quantum
theory \cite{Bojowald:2005cw,Bojowald:2006ww},
and for cases in which complicated phase space
functions have to be quantized.

The example we discuss here is important in the
context of describing homogenous and isotropic
cosmology coupled to a scalar field in the
framework of LQC \cite{Ashtekar:2006uz}. It was
found there that the standard way of quantizing
the Hamiltonian constraint
(e.g.~\cite{Ashtekar:2003hd}) led to physically
unacceptable results, and a new quantization was
introduced in \cite{Ashtekar:2006wn}. In very
brief terms, it can be described as follows: As
customary in LQC, before quantization the
curvature is expressed in terms of the connection
along a small edge. Formerly this edge was taken
to have a length proportional to the smallest
quantum of length in the full theory, as measured
in a fiducial background metric. The basic idea
of Ashtekar, Pawlowski and Singh (APS)
\cite{Ashtekar:2006wn} is to determine it in a
similar way, but with respect to the physical
metric which is subject to quantization. On a
technical level this requires quantization of the
phase space function
\begin{equation}
\label{eq_new} e(c,\mu)= \ch{\overline{\mu}(\mu)}{c}
\end{equation}
where $\overline{\mu}$ is a function fulfilling
\begin{equation}
\label{eq_mubar}
\overline{\mu}(\mu)^2=\frac{3\sqrt{3}}{2}\betr{\mu}^{-1}.
\end{equation}
The new symbol $e(c,\mu)$ replaces the function $e_0(c)=\exp(i
\mu_0 c)$ in the old quantization of the
constraint, which is constant in $\mu$.\footnote{Eq.~\eqref{eq_mubar} determines
$\overline{\mu}$ only up to sign. We will choose
$\overline{\mu}$ positive, in agreement with
\cite{Ashtekar:2006wn}.} According to
\cite{Ashtekar:2006wn}, $\mu_0$ was chosen as
$3\sqrt{3}/2$.

Obviously quantization of \eqref{eq_new}
necessitates a choice of ordering. In
\cite{Ashtekar:2006wn} a quantization was arrived
at in the following fashion: Naively $c$ would be
quantized by a derivative in $\mu$. That
derivative, and hence an operator corresponding
to $c$, fail to exist on functions in $\cyld$
however. Nevertheless one can study the action of
the operator
\begin{equation*}
\op{e}_{\text{APS}} \doteq \sum_{n=0}^{\infty} \frac{1}{n!} [
\overline{\mu}(\mu) \frac{d}{d\mu} ]^n
\end{equation*}
on smooth functions on $\R$. This action is given
by pullback with a certain diffeomorphism of
$\R$, and it continues to make sense on functions
in $\cyld$.
The upshot is that APS define an operator $\op{e}_{\text{APS}}$ on
$\cyl$ by
\begin{equation}
\label{eq_unitary} \widehat{(\op{e}_{\text{APS}} \Psi)}_\mu  = \ft{\Psi}{\sign
(\mu')\betr{\mu'}^{\frac{2}{3}}} \quad \text{
with } \quad
\mu'=\sign(\mu)\betr{\mu}^{\frac{3}{2}} +
\frac{1}{K}
\end{equation}
where $K$ is a specific numerical constant.

This quantization is very plausible for a number of reasons. First, $\op{e}_{\text{APS}}$ is a
unitary operator. Second, although \eqref{eq_unitary}
looks very complicated, it can be given a simple
interpretation: $\op{e}_{\text{APS}}$ is a constant shift on
wave functions over the \textit{volume}. Third,
maybe most importantly, it ultimately leads to a
Hamiltonian constraint that is physically viable.

Still, since \eqref{eq_unitary} is at least
partly motivated by reference to a differentiable
structure on $\RBd$, -- something that does not
exist -- it may be interesting to consider
alternatives. Therefore we proceed now to
quantize the same classical function using the
Wigner transform. The symbol $e(c,\mu)$ from
equation \eqref{eq_new} is undefined at
$\mu=0$, and we will remedy this by setting $e(c,0)=0$. Although this
appears {\em ad hoc}, it will be shown below that the same results are
obtained by taking limits of quantized operators formed from regularised
versions of $\overline{\mu}$. With that end in view, let us first
consider general symbols of the form
\begin{equation*}
e_f(c,\mu) = \ch{f(\mu)}{c}
\end{equation*}
where $f:\R\to\R$. Noting that
$\widehat{e}_f(\nu,\mu)=\delta_{\nu,\,f(\mu)}$, we have
\begin{equation*}
\widehat{e}_f(\alpha-\beta,(\alpha+\beta)/2)
= \delta_{\alpha-\beta,\, f((\alpha+\beta)/2)}
\end{equation*}
Accordingly, the properties of the quantisation $\op{e_f}$ are closely
related to the properties of equation
\begin{equation}\label{eq_star}
\alpha-\beta = f\left(\frac{\alpha+\beta}{2}\right).
\end{equation}
In particular, if there are constants $A$ and $B$ such that
\eqref{eq_star} has at most $A$ (resp., $B$) solutions for $\beta$
(resp., $\alpha$) for each fixed $\alpha$ (resp., $\beta$) then $e_f\in\symb_\infty$
and $\|\op{e_f}\|\le \sqrt{AB}$. We will now restrict to functions $f$
for which this condition holds. One may also note that
\begin{equation*}
\op{e_f} h_\beta = \sum_{\alpha\in S_f(\beta)} h_\alpha
\end{equation*}
where $S_f(\beta)$ is the set of $\alpha$ solving \eqref{eq_star} for the
given $\beta$. It follows that $\op{e_f}$ is unitary if and only if
\eqref{eq_star} implicitly defines a bijection $\beta\mapsto\alpha(\beta)$
of $\R$. Indeed, $\op{e}_\text{APS}$ is precisely of this form for a
suitable $f$.

For the particular symbol $e(c,\mu)$ of interest, the above remarks are
valid modulo the special treatment of $\mu=0$; the upshot is that
\begin{equation}
\label{eq_edef}
\op{e} h_\beta = \sum_{\alpha\in
S(\beta)} h_\alpha
\end{equation}
where $S(\beta)$ is the set of solutions
$\alpha\in\R\backslash\{-\beta\}$ to
$\alpha-\beta = \overline{\mu}((\alpha+\beta)/2)$. Equivalently, these
are the solutions to
\begin{equation}\label{eq_outer}
|\alpha+\beta|(\alpha-\beta)^2 =3\sqrt{3}.
\end{equation}
with $\alpha>\beta$; analysis of this equation reveals that there are
one, two, or three solutions for fixed $\beta$ according to whether
$\beta$ is greater than, equal to, or less than $-3^{3/2}2^{5/3}$. (This
is illustrated in the
first diagram in Figure \ref{fi_separate}, in which the
solutions would be the intersections of an
$\beta=$const. line with the graph.) The
same is true for solutions in $\beta$ for fixed $\alpha$; it therefore
follows that $e\in\symb_\infty$ and that we have $\|\op{e}\|=3$. In
contrast to $\op{e}_\text{APS}$, then, $\op{e}$ is not unitary.

The relationship between $\op{e}$ and $\op{e}_\text{APS}$ will be
discussed further below; first, we show how $\op{e}$ may be obtained as
a limit of quantizations based on regularised versions of
$\overline{\mu}$. A function
$f:\R\to\R$ will be called an \emph{$\epsilon$-regularisation} of $\overline{\mu}$
if $f(\mu)=\bar{\mu}(\mu)$ for $|\mu|>\epsilon$ and $f$ is concave
on $|\mu|\le\epsilon$. (Concavity is adopted here for
convenience; much weaker conditions would also suffice.) Thus, for
example, taking $f(\mu)=\overline{\mu}(\epsilon)$ on $|\mu|\le\epsilon$ would give an
$\epsilon$-regularisation, but there are many other possibilities.

\begin{lemm} Let $f$ be any $\epsilon$-regularisation of $\bar{\mu}$.
Then $e_f\in\symb_\infty$ and $\|\op{e_f}\|\le 5$.
\end{lemm}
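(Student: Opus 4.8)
The plan is to reduce the claim to counting the solutions of equation~\eqref{eq_star}, exactly as was done for the unregularised symbol $e$. Since $e_f(c,\mu)=h_{f(\mu)}(c)$, its partial Fourier transform is $\widehat{e}_f(\nu,\mu)=\delta_{\nu,f(\mu)}$, so the matrix $M_{\alpha\beta}=\widehat{e}_f(\alpha-\beta,(\alpha+\beta)/2)=\delta_{\alpha-\beta,\,f((\alpha+\beta)/2)}$ takes only the values $0$ and $1$. Consequently $\sum_\alpha|M_{\alpha\beta}|$ (for fixed $\beta$) and $\sum_\beta|M_{\alpha\beta}|$ (for fixed $\alpha$) simply count solutions of~\eqref{eq_star}, and by the remarks preceding the lemma it suffices to bound both counts by $5$; the assertions $e_f\in\symb_\infty$ and $\|\op{e_f}\|\le 5$ then follow with $A=B=5$ from $\|\op{e_f}\|\le\sqrt{AB}$.

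To carry out the count, I would substitute $s=(\alpha+\beta)/2$, so that~\eqref{eq_star} reads $\alpha-\beta=f(s)$ with $\alpha=s+f(s)/2$ and $\beta=s-f(s)/2$. The number of $\alpha$ for fixed $\beta$ is then the number of $s\in\R$ with $g(s)\doteq s-f(s)/2=\beta$, while the number of $\beta$ for fixed $\alpha$ is the number of $s$ with $h(s)\doteq s+f(s)/2=\alpha$. I would split $\R$ into $|s|\ge\epsilon$ and $|s|<\epsilon$. On $|s|\ge\epsilon$ we have $f=\overline{\mu}$, so there $g$ and $h$ coincide with the functions already analysed for the unregularised symbol; the solutions in this region are a subset of the solutions of~\eqref{eq_outer}, of which there are at most three for each fixed $\beta$ (respectively $\alpha$). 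On $|s|<\epsilon$, concavity of $f$ makes $g=s-f(s)/2$ convex and $h=s+f(s)/2$ concave, so each of $g(s)=\beta$ and $h(s)=\alpha$ has at most two solutions in this interval. Adding the two regions yields at most $3+2=5$ solutions in either counting direction, which is precisely the desired bound.

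The main obstacle is the interior estimate: I must guarantee that the regularisation contributes no more than two extra solutions. Convexity (respectively concavity) of $g$ (respectively $h$) gives the bound two, but only provided the relevant level sets are not degenerate intervals; this is automatic under strict concavity of $f$ — the natural reading, since $\overline{\mu}$ is itself strictly convex on each tail — and in any case a flat segment of slope $2$, which is the only way the count could blow up, can be removed by an arbitrarily small perturbation within the class of $\epsilon$-regularisations. A secondary point, and indeed an advantage of passing to the regularised symbol, is that $f$ is now finite at $\mu=0$, so no \emph{ad hoc} value need be assigned there and $e_f$ is a genuine element of $\symb$ with $M_{\cdot\,\beta}\in L^1(\RBd)$ for every $\beta$ from the outset. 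Once these two counts are established, the estimate $\|\op{e_f}\|\le\sqrt{5\cdot 5}=5$ is immediate.
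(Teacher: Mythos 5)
Your argument is essentially identical to the paper's: you split the solutions of \eqref{eq_star} into the outer region $|\alpha+\beta|>2\epsilon$, where they solve \eqref{eq_outer} and hence number at most three on each line of constant $\alpha$ or $\beta$, and the inner region $|\alpha+\beta|\le 2\epsilon$, where concavity of $f$ gives at most two (your convexity of $g=s-f(s)/2$ and concavity of $h=s+f(s)/2$ is just a reformulation of the paper's ``intersections of the graph of $f$ with a straight line'' of slope $2$), so $A=B=5$ and the Schur test gives $\norm{\op{e_f}}\le 5$. The degenerate case you flag --- $f$ containing an affine segment of slope exactly $2$, where the level sets become infinite --- is silently passed over in the paper's own proof as well, so your explicit acknowledgement of it (strict concavity, or the observation that such segments are non-generic) is, if anything, a point in your favour rather than a gap relative to the paper.
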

\begin{proof}
First note that every solution to \eqref{eq_star} with
$|\alpha+\beta|>2\epsilon$ is a solution to \eqref{eq_outer}
with $\alpha>\beta$. We have already seen that there are
at most $3$ solutions to this equation for $\alpha$ (resp.,
$\beta$) at fixed $\beta$ (resp., $\alpha$). It remains to consider
solutions with $|\alpha+\beta|\le 2\epsilon$. Fixing $\beta$, these
solutions correspond to intersections of the graph of $f$ with a straight line, and
there can be at most two of these in this region because $f$ is concave on
$[-\epsilon,\epsilon]$.
The same is true if we fix $\alpha$. Accordingly there are at most $5$
solutions to \eqref{eq_star} on lines of constant $\beta$ or $\alpha$.
The result follows by the foregoing discussion.
\end{proof}

\begin{prop} \label{pr_newholonomy}
Let $f_n$ be any sequence of $\epsilon_n$-regularisations
of $\overline{\mu}$ with $\epsilon_n\to 0^+$. Then the sequence of
operators $\op{e_{f_n}}$ converges strongly to the operator
$\op{e}$ defined above.
\end{prop}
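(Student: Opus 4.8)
The plan is to combine two ingredients already in place: the uniform operator bound $\norm{\op{e_{f_n}}}\le 5$ from the preceding lemma, together with $\norm{\op{e}}=3$, and the explicit action of both operators on the orthonormal basis of characters. Since $\cyl$ is dense in $L^2(\RB)$ and the $\op{e_{f_n}}$ are uniformly bounded, strong convergence $\op{e_{f_n}}\to\op{e}$ follows once I show $\op{e_{f_n}}\Psi\to\op{e}\Psi$ for every $\Psi\in\cyl$: for $\Psi\in L^2(\RB)$ and $\Phi\in\cyl$ with $\norm{\Psi-\Phi}$ small one estimates $\norm{\op{e_{f_n}}\Psi-\op{e}\Psi}$ by $(\norm{\op{e_{f_n}}}+\norm{\op{e}})\norm{\Psi-\Phi}+\norm{\op{e_{f_n}}\Phi-\op{e}\Phi}$, the usual $\varepsilon/3$-argument. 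By linearity it then suffices to treat $\Psi=h_\beta$ for a single fixed $\beta\in\R$.

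For fixed $\beta$ I would compare the two finite sums $\op{e_{f_n}}h_\beta=\sum_{\alpha\in S_{f_n}(\beta)}h_\alpha$ and $\op{e}h_\beta=\sum_{\alpha\in S(\beta)}h_\alpha$, where $S_{f_n}(\beta)$, $S(\beta)$ are the respective solution sets of \eqref{eq_star} and of \eqref{eq_outer}. Because the characters are orthonormal,
\[
\norm{\op{e_{f_n}}h_\beta-\op{e}h_\beta}^2 = \#\bigl(S_{f_n}(\beta)\,\triangle\,S(\beta)\bigr),
\]
the cardinality of the symmetric difference. Hence it is enough to prove that $S_{f_n}(\beta)=S(\beta)$ for all sufficiently large $n$, which gives eventual equality of the two vectors, stronger than mere convergence.

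The decisive observation is that $f_n$ and $\overline{\mu}$ differ only on $[-\epsilon_n,\epsilon_n]$, so every solution of $\alpha-\beta=f_n((\alpha+\beta)/2)$ with $\betr{\alpha+\beta}>2\epsilon_n$ solves $\alpha-\beta=\overline{\mu}((\alpha+\beta)/2)$ and lies in $S(\beta)$, and conversely. Since $S(\beta)$ is finite (at most three points) and excludes $\alpha=-\beta$, its elements obey $\betr{\alpha+\beta}\ge d_\beta$ for some $d_\beta>0$; once $2\epsilon_n<d_\beta$ they all fall in the region where $f_n=\overline{\mu}$, giving $S(\beta)\subseteq S_{f_n}(\beta)$. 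It remains to exclude spurious solutions inside the window $\betr{\alpha+\beta}\le 2\epsilon_n$. Writing $\mu=(\alpha+\beta)/2\in[-\epsilon_n,\epsilon_n]$, such a solution satisfies $f_n(\mu)=2(\mu-\beta)$, whose right-hand side is at most $2(\epsilon_n-\beta)$ and stays bounded as $\epsilon_n\to 0^+$; on the other hand, concavity of $f_n$ on $[-\epsilon_n,\epsilon_n]$ together with the boundary values $f_n(\pm\epsilon_n)=\overline{\mu}(\epsilon_n)=\sqrt{3\sqrt3/2}\,\epsilon_n^{-1/2}$ forces $f_n(\mu)\ge\overline{\mu}(\epsilon_n)$, which diverges. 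Thus for all large $n$ the left-hand side strictly exceeds the right-hand side throughout the window, so there are no solutions there and $S_{f_n}(\beta)\subseteq S(\beta)$ as well, whence $S_{f_n}(\beta)=S(\beta)$.

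The main obstacle is precisely this last step, controlling the regularised window, and it is where the singularity of $\overline{\mu}$ at the origin is used crucially: because $\overline{\mu}(\epsilon_n)\to\infty$, the concave cap of $f_n$ is pushed out of reach of the bounded linear target $2(\mu-\beta)$, so no extra intersections survive in the limit. Everything else — the reduction to basis vectors, the orthonormality computation, and the $\varepsilon/3$-argument powered by the uniform bound $\norm{\op{e_{f_n}}}\le 5$ — is routine.
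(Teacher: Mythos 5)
Your proposal is correct and follows essentially the same route as the paper's proof: reduce to the characters $h_\beta$ via the uniform bound $\norm{\op{e_{f_n}}}\le 5$ and density of $\cyl$, note that $S_{f_n}(\beta)$ and $S(\beta)$ agree outside the window $\betr{\alpha+\beta}\le 2\epsilon_n$ (for $n$ large, using finiteness of $S(\beta)$ and $-\beta\notin S(\beta)$), and exclude solutions inside the window because concavity forces $f_n(\mu)\ge \overline{\mu}(\epsilon_n)=3^{3/4}(2\epsilon_n)^{-1/2}\to\infty$ while the linear target $2(\mu-\beta)$ stays bounded. Your phrasing via the symmetric difference $\#\bigl(S_{f_n}(\beta)\,\triangle\,S(\beta)\bigr)$ and the explicit $\varepsilon/3$-argument are just more detailed renderings of the paper's statement that $S_{f_n}(\beta)=S(\beta)\cup T_n(\beta)$ with $T_n(\beta)$ eventually empty.
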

\begin{proof}
As the operators in the sequence are all bounded with norm less than
$5$, it is enough to prove strong convergence on the dense subspace
$\cyl$ of $L^2(\RB)$. In turn, it therefore suffices to establish strong
convergence of the sequence applied to each character $h_\beta$. Fix
$\beta\in\R$ and choose $N$ large enough that $S(\beta)$ has no
intersection with $[-\beta-2\epsilon_n,-\beta+2\epsilon_n]$ for $n>N$.
This is possible because $S(\beta)$ is finite and excludes $-\beta$.
For such $n$, $S_{f_n}(\beta)$ is the union of disjoint sets $S(\beta)$ and
\begin{equation*}
T_n(\beta)\doteq S_{f_n}(\beta)\cap [-\beta-2\epsilon_n,-\beta+2\epsilon_n]
\end{equation*}
so we may write
\begin{equation*}
\op{e_{f_n}} h_\beta =\op{e} h_\beta + \sum_{\alpha\in T_n(\beta)} h_\alpha.
\end{equation*}
We now claim that $T_n(\beta)$ is empty for all sufficiently large $n$,
thus establishing that $\op{e_{f_n}} h_\beta\to \op{e} h_\beta$ and
hence (as $\beta$ is arbitrary) the required result.

The claim is proved as follows: if $\alpha\in T_n(\beta)$ then
$|\alpha+\beta|\le 2\epsilon_n$, and hence
$\alpha-\beta<2(|\beta|+\epsilon_n)$. But by concavity
$f_n((\alpha+\beta)/2)\ge f_n(\epsilon_n)=3^{3/4}(2\epsilon_n)^{-1/2}$ for $|\alpha+\beta|\le
2\epsilon_n$. As $\epsilon_n\to 0$ it is therefore clear that for all
sufficiently large $n$ there are
no solutions to $\alpha-\beta=f_n((\alpha+\beta)/2)$; hence $T_n(\beta)$
is empty as required.
\end{proof}

In view of these results, we are encouraged to regard the choice
$e(c,0)=0$ as well-motivated, and $\op{e}$ as an appropriate Weyl
quantization of the original symbol defined (for $\mu\not=0$) in \eqref{eq_new}.
To visualize the results of the above proposition, and to get a better feeling for the action of
$\op{e}$, it is helpful to plot the space of solutions to
\eqref{eq_star} with $f=\overline{\mu}$. Instead of solving that equation
directly, we will give a parametrization of its
solution space: It is easily checked that
\begin{equation}
\label{eq_solution} \{(\alpha,\beta) |
\alpha-\beta=\overline{\mu}(\frac{\alpha+\beta}{2})\}
=
\{(x+\frac{1}{2}\overline{\mu}(x),x-\frac{1}{2}\overline{\mu}(x)|
x\in\R \}.
\end{equation}
This set is plotted as a graph in the
$\alpha-\beta$-plane in the first diagram of
Figure \ref{fi_separate}.
Many of the properties of $\op{e}$ stated in the
previous lemma can also be obtained from a visual
inspection of what we will call its `matrix
representation'. It will also be useful in the
comparison of $\op{e}$, $\op{e}_{\text{APS}}$, and $\op{h}_{\mu_0}$. Note
that all these operators have only matrix
elements equal to 1 or 0 in the basis $\cyl$. So
we can visualize them by plotting the set of
matrix elements $\scpr{h_\beta}{\,\cdot\,
h_\alpha}$ that are equal to 1 as a set in the
plane.\footnote{In other words, we can plot the
graphs of the operators as the graphs of
functions.}
\begin{figure}
\centerline{\epsfig{file=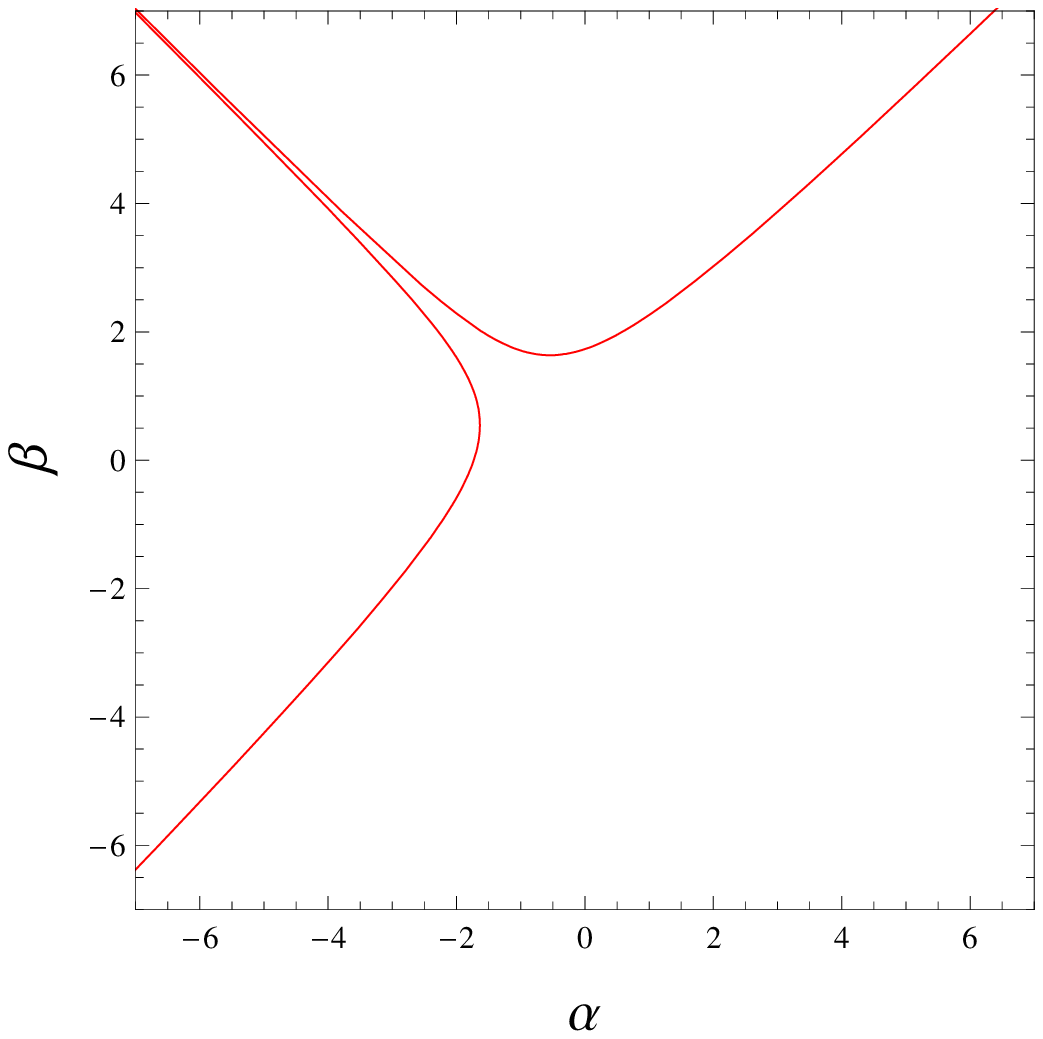,scale=0.5}
\epsfig{file=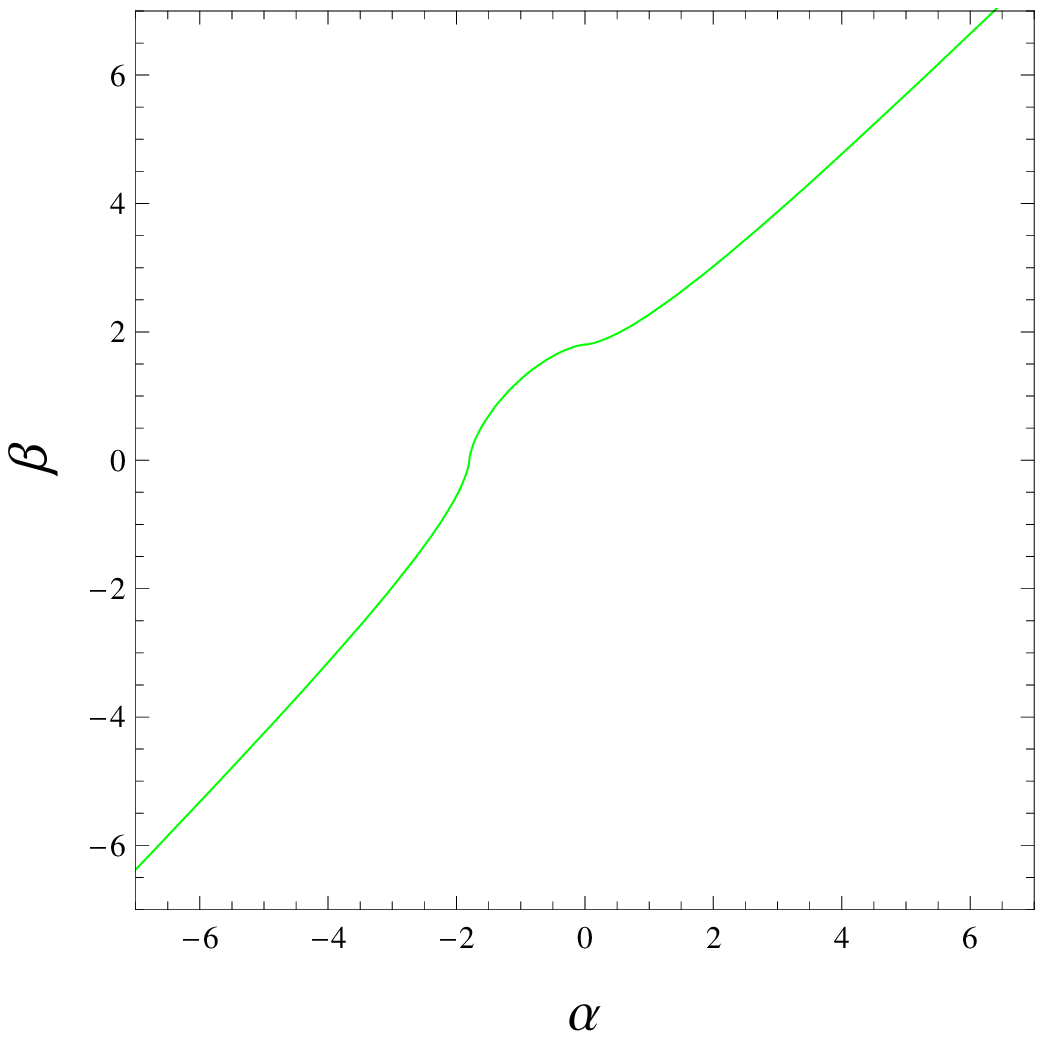,scale=0.5}}
\centerline{\epsfig{file=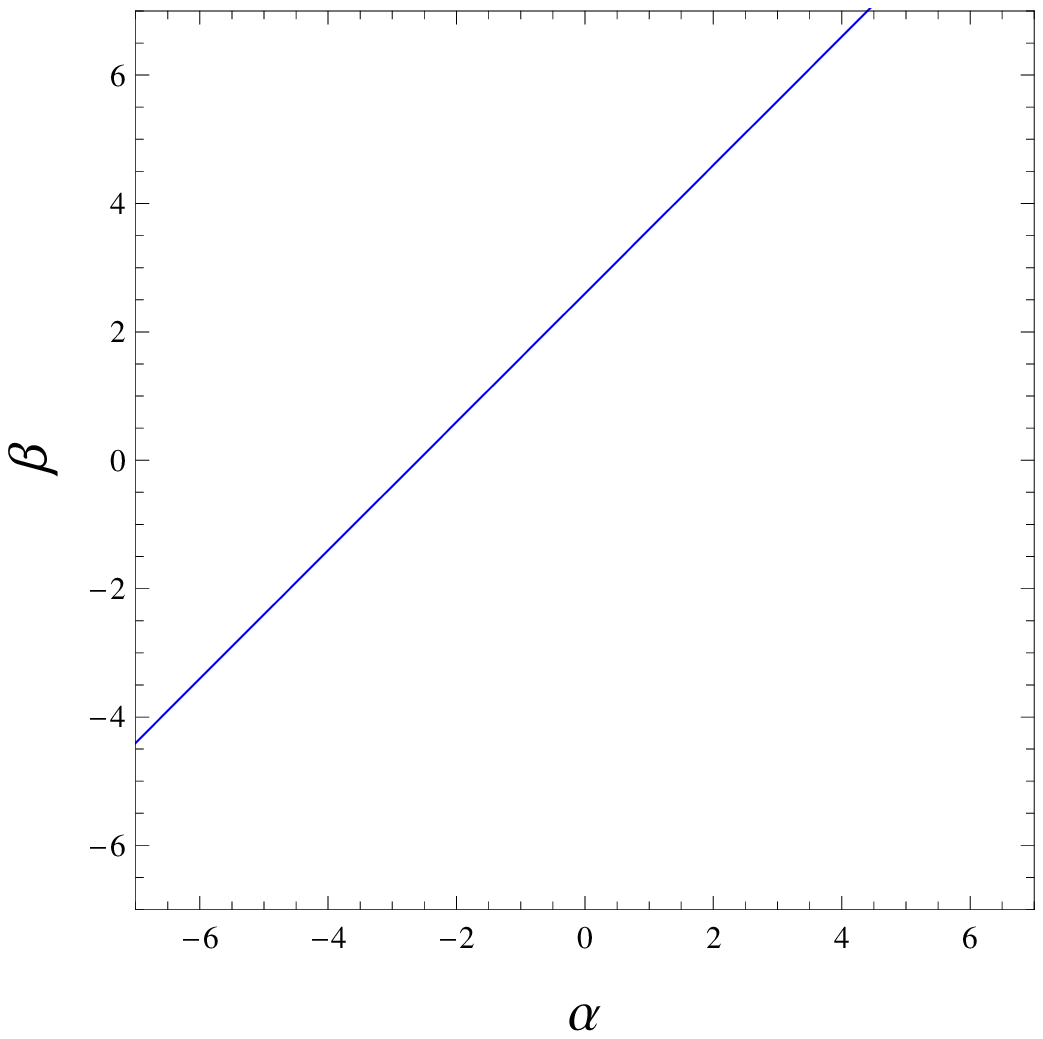,scale=0.5}}
\caption{\label{fi_separate}Matrix
representations of the operators $\op{e}$, $\op{e}_{\text{APS}}$,
$\op{h}_{\mu_0}$}
\end{figure}
\begin{figure}
\centerline{\epsfig{file=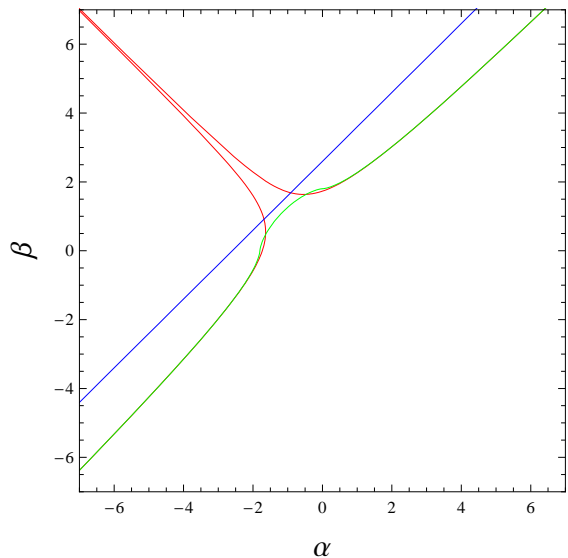,scale=.7}}
\caption{\label{fi_all}Comparison of the
operators $\op{e}$, $\op{e}_{\text{APS}}$, $\op{h}_{\mu_0}$}
\end{figure}
This is done in Figure \ref{fi_separate} for the
operators separately, and in Figure \ref{fi_all},
for easy comparison, into one diagram. One can
see how, for example, $\op{e}$ is not unitary, because
its graph is that of a \textit{multi-valued}
function.

Let us compare the operator $\op{e}$ that we obtained
here with $\op{e}_{\text{APS}}$ of \cite{Ashtekar:2006wn}: The
first difference is that $\op{e}_{\text{APS}}$ unitary in
contrast to $\op{e}$. Another difference is the
``spike'' in the graph for $\op{e}$: Whereas the graph
of $\op{e}$ has a part (solid lines in Fig.
\ref{fi_branches}, that is very similar to $\op{e}_{\text{APS}}$, and can be characterized by
$\alpha\approx\beta$ for large $\alpha$ and
$\beta$, the graph also has a part that is very
different (drawn as a dashed line in Figure
\ref{fi_branches}), that can be characterized by
$\beta \approx -\alpha$, $\alpha<0$. This new
feature may point to difficulties with the
semiclassical limit for this operator because,
loosely speaking, states with large momentum
eigenvalues correspond to a universe with large
spatial extension, and the operator is obviously
changed on those large volume states, as compared
to $\op{e}_{\text{APS}}$ and $\op{h}_{\mu_0}$. On the other hand, in
the present model eigenstates of momentum $\mu$
and $-\mu$ are physically identical\footnote{They
are related by the parity transform $\Pi$ which
is a symmetry of the system, see
\cite{Ashtekar:2006uz} for details.} whence the
two parts of the graph of $\op{e}$ (dashed and solid
respectively, in Fig. \ref{fi_branches}) may act
in a very similar way on the physical level.
\begin{figure}
\centerline{\epsfig{file=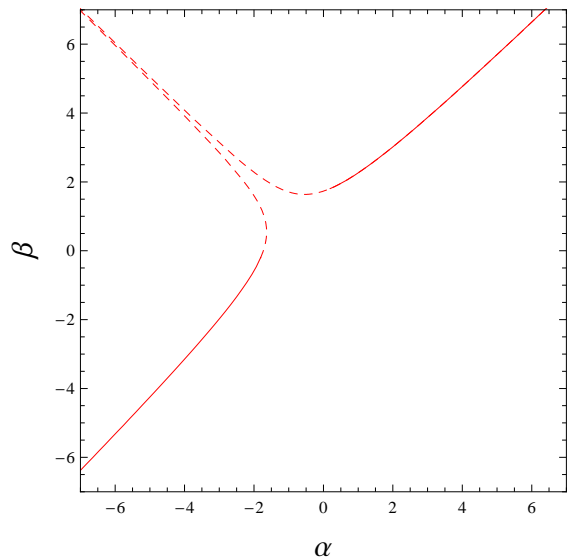,scale=.7}}
\caption{\label{fi_branches} The different parts
of the matrix-representation of $\op{e}$: The part
that differs very little from the operator $\op{e}_{\text{APS}}$
(solid), and the ``spike'' that is rather
different (dashed)}
\end{figure}

Ultimately one will have to construct the full
Hamiltonian constraint using the operator
$\op{e}$ and compare the physical results to
those obtained in \cite{Ashtekar:2006wn}, and we
see no problems of principle for doing this. In
particular we note that using $\op{e}$ will give
a Hamiltonian constraint that commutes with the
action of the parity operator $\Pi$, since one
finds that
\begin{equation*}
\widehat{\sin(\overline{\mu}c)}\,\Pi=-\Pi\,\widehat{\sin(\overline{\mu}c)},\qquad
\widehat{\cos(\overline{\mu}c)}\,\Pi=\Pi\,\widehat{\cos(\overline{\mu}c)}
\end{equation*}
also with this quantization. The analysis will be
substantially more complicated since it does not seem
that there is a basis in which the action of $\op{e}$
drastically simplifies (such as is the case for
$\op{e}_{\text{APS}}$ with respect to the volume eigenvector
basis). In particular it does not seem to be
likely that the kinematical Hilbert space will be
decomposable into different superselection
sectors as was the case in other quantizations.
That said, we will leave a detailed analysis for
the future, and turn now to a discussion of the
results of the present paper.
\section{Closing remarks}
\label{se_discussion}
In the present paper we have given a definition
of the Wigner function for wave functions over
the Bohr compactification $\RB$ of the real line
and shown that our definition possesses many
properties analogous to the Wigner function in
ordinary quantum mechanics.

Since wave functions over the Bohr
compactification figure prominently in loop
quantum cosmology, the Wigner function should be
of use in that context. To demonstrate this, we
used the Wigner function machinery to give an
alternative quantization of an important building
block of the Hamiltonian constraint for
homogenous isotropic cosmology as treated in LQC
\cite{Ashtekar:2006uz,Ashtekar:2006wn}. We should
stress again that the ordering chosen in
\cite{Ashtekar:2006wn} has many desirable
properties and we do not want to claim in any way
that it is wrong or inappropriate. Rather, by
adding Weyl quantization to the toolbox of those
working in LQC, we provide a quantization method
that is applicable to a wide variety of
situations, without the need to make any {\em ad
hoc} choices. Whether the results are physically
viable must still be determined in each instance
separately.

The content of the present paper could be further
developed in several directions: On the
mathematical side, a more detailed investigation
of the properties of the quantization map (e.g.,
with respect to products, or the semiclassical
limit) could be undertaken. On the physical side,
the quantization of the Hamiltonian constraint
using the operator $\op{e}$ from \eqref{eq_edef}
should be completed. Then its physical
implications need to be analyzed, along the lines
of, say \cite{Ashtekar:2006uz}. One should also
consider application of the Weyl quantization in
cases in which the method from
\cite{Ashtekar:2006wn} cannot be directly
applied. An example for this would be homogenous
but non-isotropic cosmologies.

But arguably the most interesting extension of the
present work would consist in finding a generalization
of the Wigner function to the quantum field theoretic
context of full loop quantum gravity. This seems to be, at the same time, a
very challenging undertaking. In the full theory
wave functions live, roughly speaking, on a certain
inductive limit of products of the Lie group $\textrm{SU}(2)$.
So the two main problems we expect are 1) the definition of a
``good'' Wigner function on $\textrm{SU}(2)$ that 2) interacts well
with taking the inductive limit.
\section*{Acknowledgements}
This work was started during the workshop
\textit{Global Problems in Mathematical
Relativity} at the Isaac Newton Institute for
Mathematical Sciences, Cambridge, and we would
like to thank the institute as well as the
organizers of the workshop, P.T.\ Chrusciel, H.\
Friedrich, and P.\ Tod.

We also thank A.\ Ashtekar, J.\ Lewandowski and
G.A.\ Mena Marug{\'a}n for discussions and M.\
Bojowald for discussions and comments on a draft
of the present paper.

H.S.\ gratefully acknowledges funding for this
work through a Marie Curie Fellowship of the
European Union.

\end{document}